\documentclass{article}
\usepackage{authblk}

\usepackage{tikz}
\usetikzlibrary{shapes, calc}

\usepackage[a4paper, total={6in, 8in}]{geometry}

\usepackage{amsmath}
\usepackage{amssymb}
\usepackage{amsthm}
\newtheorem{theorem}{Theorem}
\newtheorem{lemma}{Lemma}
\newtheorem{example}{Example}

\usepackage{comment}
\usepackage{graphicx}
\usepackage{float}
\usepackage{subcaption}
\newcommand{\Q}{\mathbb{Q}}

\newcommand{\cbold}{\mathbf{c}}

\newcommand{\catw}[1]  {
  \mathbf{#1}
}
\newcommand{\mor}[3]  {
  #1 \colon #2 \rightarrow #3
}
\newcommand{\tuple}[1]  {
	\langle #1 \rangle
}
\newcommand{\boolalg}[1]  {
  \catw{Bool}_{\catw{Set}[1]}
}

\newcommand{\pprob}[1]  {
	#1_n[x]
}

\newtheorem{remark}{Remark}

\usepackage{algorithm}
\usepackage{algpseudocode}

\usepackage{listings}
\usepackage{xcolor}

\definecolor{codegreen}{rgb}{0,0.6,0}
\definecolor{codegray}{rgb}{0.5,0.5,0.5}
\definecolor{codepurple}{rgb}{0.58,0,0.82}
\definecolor{backcolour}{rgb}{0.96,0.96,0.96} 

\lstdefinestyle{pythonstyle}{
    backgroundcolor=\color{backcolour},   
    commentstyle=\color{codegreen},
    keywordstyle=\color{magenta},
    numberstyle=\tiny\color{codegray},
    stringstyle=\color{codepurple},
    basicstyle=\ttfamily\footnotesize, 
    breakatwhitespace=false,         
    breaklines=true,                 
    captionpos=t,                    
    keepspaces=true,                 
    numbers=left,                    
    numbersep=5pt,                  
    showspaces=false,                
    showstringspaces=false,
    showtabs=false,                  
    tabsize=4,
    frame=lines,                     
    rulecolor=\color{black}
}

\title{Information-Based Complexity vs Computational Complexity in Phaseless Polynomial Interpolation}

\author[1]{Michał R. Przybyłek}
\affil[1]{Polish-Japanese Academy of Information Technology; Warsaw, Poland\protect\\ \texttt{mrp@mimuw.edu.pl}}

\author[2]{Paweł Siedlecki}
\affil[2]{University of Warsaw; Warsaw, Poland\protect\\ \texttt{psiedlecki@mimuw.edu.pl}}

\date{October 25, 2025}

\begin{document}

\maketitle
\abstract{The authors of \cite{przybylek2020note} have shown that phaseless polynomial interpolation over $\mathbf{Q}$ is possible with $n+2$ points, where $n$ is the upper-bound on the degree of a polynomial. Nonetheless, their reconstruction algorithm and the method of adaptively choosing evaluation points are exponential time. On the other hand, they have also shown that given $2n+1$ points, the polynomial can be reconstructed in a polynomial time. In \cite{przybylek2020note} a conjecture have been put forward, namely 
that the reconstruction problem from such $n+2$ points is exponential time. Moreover, a question about the number of points sufficient for polynomial time reconstruction have been posed. In this paper, we answer these questions -- we show that (1) reconstruction problem from $2n-k$ for any constant $k$ is polynomial time, (2) reconstruction problem from $(1+c)n+2$ points for any constant $c \in [0, 1)$ is NP-Complete, (3) evaluation points admitting a unique solution can be chosen in polynomial time.}

\section{Introduction}

The challenge of phase retrieval--the reconstruction of a signal, function, or vector from the magnitude of its measurements--is a foundational problem in applied mathematics \cite{jaganathan2016phase}, physics, and engineering. It appears in diverse fields such as X-ray crystallography, microscopy, optics, and signal processing. In the algebraic context, this problem manifests as phaseless polynomial interpolation: the reconstruction of a polynomial $p(x)$ from a set of evaluation points $x_i$ and their corresponding absolute values, $|p(x_i)|$.

This problem immediately highlights a fundamental dichotomy, which is the central theme of this work: the distinction between \emph{information-based complexity} and \emph{computational complexity}. The first asks: how many sample points are \emph{informationally sufficient} to uniquely determine the polynomial (up to an unobservable global phase)? The second asks: given a sufficient number of points, what is the \emph{computational tractability} of an algorithm that performs this reconstruction? A problem may be solvable in principle (i.e., have sufficient information) but intractable in practice (i.e., require exponential time).

Recent work, notably \cite{przybylek2020note}, has precisely framed this gap. For polynomials of degree at most $n$ with rational coefficients ($\mathbf{Q}[x]$), it has been shown that $n+2$ phaseless evaluation points are informationally sufficient and necessary for unique reconstruction. This establishes a low information-theoretic bound. In contrast, the authors of that paper provided a polynomial-time reconstruction algorithm when $2n+1$ points are given. The algorithm they provided for the $n+2$ point case, however, requires exponential time.

This disparity left two critical questions unanswered, which that paper posed as open problems:
\begin{enumerate}
    \item Is the reconstruction problem from $n+2$ points \emph{inherently} of exponential-time complexity?
    \item What is the true computational threshold? What is the minimum number of points required for a \emph{polynomial-time} reconstruction?
\end{enumerate}

In this paper, we resolve both of these questions. We provide a complete characterization of the computational complexity landscape for this problem, drawing a sharp, and perhaps surprising, line between the tractable and the intractable.

Our first main result addresses the polynomial-time regime. We demonstrate that the $2n+1$ point requirement is not optimal. We prove that reconstruction from $m = 2n-k$ points, for any fixed constant $k$, is solvable in \emph{polynomial time} (in $n$ and the bit-size of the input). Our method relies on techniques from computational algebraic geometry. We first parameterize the $(k+1)$-dimensional affine space of all degree $\le 2n$ polynomials $p$ satisfying the $2n-k$ constraints (where the constraints are $p(x_i) = |q(x_i)|^2$). We then impose the algebraic condition that the solution must be a perfect square. This results in a system of polynomial equations in $k+1$ variables. Since $k$ is constant, this system can be solved in polynomial time using Gröbner basis methods augmented with LLL-reductions.

Our second main result answers the intractability conjecture. We prove that the reconstruction problem from $(1+c)n+2$ points, for any constant $c \in [0, 1)$, is \emph{NP-complete}. This includes the $n+2$ point case. We prove this by a reduction from the Partition Problem. We show that finding the correct set of signs $c_i \in \{-1, 1\}$ for the evaluations $p(x_i) = c_i |p(x_i)|$ is equivalent to solving an instance of the Partition Problem. This establishes that the exponential barrier observed in prior work is not an algorithmic artifact but an inherent feature of the problem's complexity.

Finally, we also show that in case of less than $2n+1$ evaluation points adaptation is necessary, but it is sufficient to use only a single evaluation point adaptively and the choice can be done in polynomial-time.

The paper is structured as follows. In Section 2 we provide necessary definitions and notions. In Section 3 we present the polynomial-time algorithm for $2n-k$ points, detailing the parameterization and the algebraic-geometric solution. In Section 4, we present our NP-completeness proof.

\section{Setting}
The phaseless polynomial interpolation problem over a field $\mathbf{K}$, where $\mathbf{K}$ is a subfield of the field 
$\mathbf{C}$ of complex numbers, can be  described informally as follows. Given an upper bound $n$ on the degree of polynomials $q \in \mathbf{K}[x]$, choose numbers $x_i \in \mathbf{K}$ for $0 \leq i \leq N$ such that given nonnegative values $y_i\in\mathbf{K}$ there is a unique, up to an absolute value, polynomial $q \in \mathbf{K}_n[x]$ interpolating $y_i$ without a phase, i.e. we have $|q(x_i)| = y_i$ and if a polynomial $q'$ of degree at most $n$ satisfies $|q'(x_i)| = y_i$, then $q'=\alpha q$ for some $\alpha\in\mathbf{K}$ with $|\alpha|=1$. There are two versions of performing the above choice: non-interactive and interactive. A non-interactive choice is when the interpolation procedure has to choose \emph{all} numbers $x_i \in \mathbf{K}$ for $0 \leq i \leq N$ first and then the corresponding values $y_i \in \mathbf{K}$ are revealed. An interactive choice is when the interpolation procedure has to choose a single $x_i$ at a time, then the corresponding value $y_i$ is revealed and the procedure decides whether or not it needs more values and if so it chooses $x_{i+1}$ based on revealed values $y_k$ for $0 \leq k \leq i$.

Moreover, in this paper, we are interested not only in the choice of points $x_i$ (interactively or not), but also in the \emph{effectiveness} of the process of reconstruction of the interpolating polynomial $q \in \mathbf{K}_n[x]$, 
which culminates in listing its coefficients $a_j \in \mathbf{K}$ for $0 \leq j \leq n$.

Although the reconstruction problem for the non-interactive choice of interpolation points can be formalized as a classical problem in computational complexity, that is: ``given $\tuple{x_i, y_i}$ for the input find phaseless interpolating polynomial $q \in \mathbf{K}_n[x]$'', the formalisation of the interactive choice is more challenging. It, clearly, cannot be formalised as a problem in computational complexity, because there is no single well-defined input to the problem--the algorithm itself can control (to some degree) what input will be provided to the algorithm next, or if it is satisfied with the input given so far. This is less of a problem if we want to provide positive results--to show that the problem can be solved in, say, polynomial time, it suffices to show an interactive procedure that runs in polynomial time. 
Proving negative results is more challenging, especially if we believe that the problem is related to complexity classes like ``non-deterministic polynomial time''. The reason is twofold: the classical machinery of stating that a problem is hard for a given class cannot work, because the interactive choice as stated is not an algorithmic problem in the classical sense; on the other hand, restricting the instances to the problems with unique (up to a phase) solutions gives a theoretical barrier in proving that a problem is hard for ``non-deterministic polynomial time''' class (i.e.~it is a difficult open question whether or not there are any NPH problems with unique solutions). For this reason, we can prove the hardness of the problem for non-interactive choice only.

The classical setting of the problem, as described in \cite{przybylek2020note}, is given by the framework of Information-Based Complexity (IBC). According to this setting, a problem consists of a function
$$\mor{P}{V}{W}$$
between a set $V$ and (usually) a metric space $W$ and a class $\Lambda$ of basic information operations (or basic measurements)
$$V \rightarrow \mathbf{K}$$
An approximate solution to a problem $P, \Lambda$ consists of an \emph{information operator} 
$N \colon V\to\mathbf{K}^*$ and an \emph{algorithm} 
$\phi\colon\mathbf{K}^*\to V$ (with $\mathbf{K}^*$ denoting the set of all finite sequences of elements from $\mathbf{K}$) such that: $\phi\circ N \approx P$ like on Figure~\ref{fig:approx_comm_small}. 

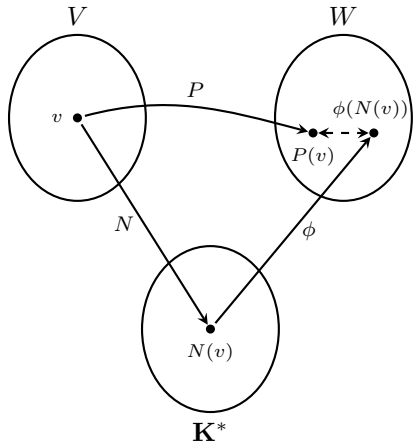
\begin{figure}[h]
    \centering
    \begin{tikzpicture}[
        >=stealth,
        set_node/.style={draw, ellipse, minimum width=1.8cm, minimum height=2.2cm, thick},
        element_node/.style={circle, fill, inner sep=1.2pt},
        func_label/.style={midway, font=\footnotesize}
    ]

        \node[set_node, label=above:$V$] (V) at (0, 0) {};
        \node[set_node, label=above:$W$] (W) at (3.5, 0) {};
        \node[set_node, label=below:$\mathbf{K}^*$] (K) at (1.75, -2.8) {};

        \node[element_node, label=left:\scriptsize $v$] (v) at (V.center) {};
        \node[element_node, label=below:\scriptsize $N(v)$] (Nv) at (K.center) {};

        \node[element_node, label=below:\scriptsize $P(v)$] (Pv) at ($(W.center) - (0.4, 0.2)$) {};
        \node[element_node, label=above:\scriptsize $\phi(N(v))\;$] (SNv) at ($(W.center) + (0.4, -0.2)$) {};

        \draw[->, thick, shorten <=1pt] (v) to[out=15, in=165] node[func_label, above] {$P$} (Pv);

        \draw[->, thick, shorten <=1pt] (v) -- node[func_label, left] {$N$} (Nv);

        \draw[->, thick, shorten <=1pt] (Nv) -- node[func_label, right] {$\phi$} (SNv);

        \draw[<->, dashed, thick] (Pv) -- node[midway, right, font=\tiny, xshift=1pt] {} (SNv);

    \end{tikzpicture}
    \caption{Information-Based Complexity perspective: problem $P$, information operator $N$ and algorithm $\phi$ such that for every $v\in V$ we have that: $\phi(N(v)) \sim P(v)$.}
    \label{fig:approx_comm_small}
\end{figure}

Moreover, information operator $N$ must be built from basic information operators $\lambda_i \in \Lambda$ in a certain way. In case $N = \tuple{\lambda_i}_{i=0}^j$ for some fixed $j$, we say that the solution is \emph{non-adaptive}, which roughly corresponds to the non-adaptive choice from the informal statement of phaseless interpolation problem. The other case is when the solution is \emph{adaptive}--there is a procedure $\mor{\phi}{\mathbf{K}^*}{\Lambda \sqcup \{\bot\}}$ such that: 
$$N(v)_j = 
\begin{cases} 
    \phi(\epsilon) & \text{if } j = 0 \\
    \phi\big(N(v)_0, N(v)_1, \dots, N(v)_{j-1}\big)(v) & \text{if } j > 0
\end{cases}$$
where the second branch is defined for $i < j$ where $N_j(v) = \bot$, which indicates that no more information will be used by the algorithm. This roughly corresponds to the interactive choice of points from the informal statement of phaseless interpolation problem. 

We usually measure the quality of approximation in terms of the distance 
$$\max_{v\in V}(d_W(\phi(N(v)), P(v)),$$ 
the information complexity in terms of $\max_{v\in V}(|N(v)|)$, and computational complexity in terms of the maximal number of steps performed by a machine realizing $\phi$ (in a chosen computational model) on input $v \in V$, \emph{plus} the number of steps performed by a machine realizing $N$ (again, in some chosen computational model) 
in order to compute basic information operators $\lambda_i$. Often, when it is important which basic information operations from $\Lambda$ a given algorithm uses, instead of providing number $\max_{v\in V}(|N(v)|)$ we give an explicit characterisation of the basic information operations used. More on information operators and their role in IBC in more general settings can  be found, e.g., in \cite{tww1988}, \cite{plaskota1996noisy} and \cite{avi}. The phaseless polynomial interpolation fits into this framework as follows. We set $V = \mathbf{K}_n[x]$, $W = V/\sim$ equipped with the discrete metric, where $q \sim q'$ iff $q = \alpha q'$ for some $\alpha\in\mathbf{K}$ with $|\alpha|=1$, $P(q) = |q|$ and $\Lambda = \{ \delta_x \colon x \in \mathbf{K} \}$, where $\delta_x(q) = |q(x)|$ are absolute-value evaluations. The solution must be exact, i.e., $P(v) = \phi(N(v))$. 

We note in passing that for $q,q'\in\mathbf{K}$ the relation $q \sim q'$ holds if and only if $|q|=|q'|$ as functions 
$\mathbf{K}\to\mathbf{K}$.

Because we are interested in computational complexity in the traditional sense of Turing Machines, we shall restrict our considerations to the field of rational numbers, i.e., to $\mathbf{K} = \mathbf{Q}$. Moreover, we shall assume that all objects under considerations are encoded in an effective way. This assumption provides another natural setting for the polynomial interpolation problem, more familiar to the field of computer science, namely, Query Complexity or Decision Tree Complexity \cite{buhrman2002complexity}.

In the Query Complexity framework, the role of basic information operators $\Lambda$ is played by a black-box oracle $O$ that, if given some input $v$, can produce an output $O(v)$. An algorithm for a problem is an algorithm in the traditional sense relative to the given oracle $O$. We measure the query complexity of the problem in terms of the number of queries performed to oracle $O$ and the computational complexity in the usual way: by the number of steps performed by the algorithm. This setting explicitly deals with the computational complexity necessary for the choice of the query points (e.g.~interpolation points in our problem). The phaseless polynomial interpolation naturally fits into this framework by using an oracle $O$ that encodes absolute value $|p|$ of a given polynomial $p$ -- i.e.~oracle $O$ accepts as an input a rational number $x$ and outputs $|p(x)|$. The task is to identify what polynomial is represented by oracle $O$.

We use the following notational conventions throughout the paper. As indicated above, by 
$$\pprob{\mathbf{K}}=\{a_0+a_1x+\cdots+a_nx^n\colon \ a_0,a_1,\ldots,a_n\in\mathbf{K}\}$$ 
we will denote 
the linear space of all polynomials of degree at most $n$ ($n\in\mathbb{N}$) with coefficients from $\mathbf{K}$.  More generally, if $\tuple{\overline{x}, \overline{y}} = \tuple{x_0, y_0}, \tuple{x_1, y_1}, \tuple{x_2, y_2}, \dots, \tuple{x_{k-1}, y_{k-1}}$ are some interpolation points in $\mathbf{K}^2$, then by $\mathbf{K}_n(\overline{x}, \overline{y})[x]$ we will denote the set of polynomials of degree at most $n$ passing through points $\tuple{x_i, y_i}$ for $0 \leq i < k$. Observe that if $p, q \in \mathbf{K}_n(\overline{x}, \overline{y})[x]$, then $(p-q)(x_i) = p(x_i)-q(x_i) = y_i-y_i=0$, therefore $p-q$ vanishes on $\{x_0, x_1, \dotsc, x_{k-1}\}$. Moreover, in the other direction, if $z \in \mathbf{K}_{n}[x]$ vanishes on $\{x_0, x_1, \dotsc, x_{k-1}\}$, then $p+z$ interpolates $\tuple{x_0, y_0}, \tuple{x_1, y_1}, \dotsc, \tuple{x_{k-1}, y_{k-1}}$, therefore $p+z \in \mathbf{K}_n(\overline{x}, \overline{y})[x]$. Because the set of all polynomials of degree at most $n$ that vanish on $\{x_0, x_1, \dotsc, x_{k-1}\}$ forms an $(n-k)$-dimensional vector space, the set $\mathbf{K}_n(\overline{x}, \overline{y})[x]$ is an $(n-k)$-dimensional \emph{affine} space.

\begin{lemma}\label{l:affine}
    Let $k \in \mathbb{N}$ be a fixed constant. Given $n \geq k$ and a sequence 
    $$\tuple{x_0, y_0}, \tuple{x_1, y_1}, \dotsc, \tuple{x_{n-k}, y_{n-k}}$$ 
    where $x_i$ are pairwise distinct, the set of all polynomials $p(x) \in \mathbf{K}_{n}[x]$ such that $p(x_i) = y_i$ for all $i=0, \dots, n-k$ forms a $k$-dimensional affine space $\mathbf{K}_{n}(\overline{x}, \overline{y})[x]$. The isomorphism $\mathbf{K}^k \approx \mathbf{K}_{n}(\overline{x}, \overline{y})[x]$ is given as:
    $$\cbold \in \mathbf{K}^k \mapsto  p(x, \cbold) = L(x) + \left( \sum_{j=0}^{k-1} c_j x^j \right) \prod_{i=0}^{n-k} (x-x_i)$$
    where $\cbold \in \mathbf{K}^k$ are the coordinates and $L(x)$ is the unique Lagrange interpolating polynomial of degree at most $n-k$ that interpolates $\langle x_i, y_i\rangle_{i = 0}^{n-k}$:
   $$L(x) = \sum_{j=0}^{n-k} y_j \ell_j(x)$$
    with Lagrange basis polynomials $\ell_j(x)$:
    $$\ell_j(x) = \prod_{\substack{i=0 \\ i \neq j}}^{n-k} \frac{x - x_i}{x_j - x_i}$$
\end{lemma}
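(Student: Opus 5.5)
The plan is to show that the map $\cbold \mapsto p(x,\cbold)$ is well defined, lands in $\mathbf{K}_{n}(\overline{x}, \overline{y})[x]$, is injective and affine, and is surjective. Write $\omega(x) = \prod_{i=0}^{n-k}(x-x_i)$, a monic polynomial of degree exactly $n-k+1$. The proof rests on two standard facts. First, $L$ is the unique polynomial of degree at most $n-k$ with $L(x_i)=y_i$, since there are $n-k+1$ distinct nodes. Second, a polynomial vanishes at all the distinct $x_i$ exactly when it is divisible by $\omega$.

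\emph{Well-definedness and membership.} Each $\ell_j$ has degree $n-k$, and it satisfies $\ell_j(x_i)=\delta_{ij}$ because the $x_i$ are pairwise distinct, so the denominators are nonzero. Hence $L(x_i)=y_i$ and $\deg L\le n-k\le n$. The correction term $\left(\sum_{j<k} c_j x^j\right)\omega(x)$ has degree at most $(k-1)+(n-k+1)=n$ and vanishes at every $x_i$. Therefore $p(x,\cbold)\in\mathbf{K}_n[x]$ and $p(x_i,\cbold)=y_i$. The map is affine in $\cbold$: it is $L$ plus the linear map $\cbold\mapsto(\sum c_jx^j)\omega$.

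\emph{Injectivity.} If $p(x,\cbold)=p(x,\cbold')$, then $(\sum (c_j-c'_j)x^j)\,\omega=0$. Since $\mathbf{K}[x]$ is an integral domain and $\omega\neq0$, we get $\cbold=\cbold'$.

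\emph{Surjectivity, the only step with content.} Take any $p\in\mathbf{K}_n[x]$ with $p(x_i)=y_i$. Then $p-L$ vanishes at the $n-k+1$ distinct points $x_i$. By the factor theorem applied repeatedly, using distinctness so that the linear factors are pairwise coprime, $\omega$ divides $p-L$. Write $p-L=s\,\omega$. Since $\deg(p-L)\le n$ and $\deg\omega=n-k+1$, we get $\deg s\le k-1$. So $s=\sum_{j<k}c_jx^j$ for a unique $\cbold\in\mathbf{K}^k$, and $p=p(x,\cbold)$.

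Thus the map is an affine bijection $\mathbf{K}^k\to\mathbf{K}_{n}(\overline{x}, \overline{y})[x]$, which gives the dimension $k$. This also matches the remark before the lemma: the vanishing polynomials form the linear space $\omega\cdot\mathbf{K}_{k-1}[x]$, of dimension $n-(n-k+1)+1=k$. The main point to get right is the degree bookkeeping in the surjectivity step, which is what pins $s$ to degree at most $k-1$. The hypothesis $n\ge k$ guarantees there is at least one node, so $\omega$ is nonconstant and the count works.
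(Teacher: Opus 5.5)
Your proof is correct, but you get surjectivity differently from the paper. The paper checks only your membership step, namely that every $p(x,\cbold)$ interpolates the data. It then finishes with a dimension count. The image of $\cbold\mapsto p(x,\cbold)$ is declared ``clearly'' a $k$-dimensional affine space, which silently uses your injectivity step. That image lies inside the solution set, which the paragraph before the lemma asserts is $k$-dimensional, so the two coincide.

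You instead prove surjectivity directly. With $\omega=\prod_{i=0}^{n-k}(x-x_i)$, the difference $p-L$ vanishes at the $n-k+1$ distinct nodes, so $\omega\mid p-L$. The degree bound then forces the cofactor into $\mathbf{K}_{k-1}[x]$.

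The paper's route is shorter, but it leaves the real content to the asserted dimension of the space of polynomials vanishing at the nodes. Your divisibility argument is self-contained and actually proves that claim, by identifying the vanishing space as $\omega\cdot\mathbf{K}_{k-1}[x]$. This matters here: for $k$ nodes, the preliminary paragraph gives that dimension as $n-k$ instead of $n-k+1$. Taken literally, it would give $k-1$ in the lemma's setting.

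So your count $n-(n-k+1)+1=k$ is right and does not rely on that paragraph. It matches what the paper intended rather than what it literally wrote. You also check explicitly that $\deg p(x,\cbold)\le n$, which the paper leaves implicit.
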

\begin{proof}
    The argument given above shows that the set forms a $k$-dimensional affine space. Because $p(x; \cbold)$ is clearly a $k$-dimensional affine space, it suffices to show that for every vector $[c_0, c_1, \dots, c_{k-1}]$ the polynomial $L(x) + \left( \sum_{j=0}^{k-1} c_j x^j \right) \prod_{i=0}^{n-k} (x-x_i)$ interpolates $\langle x_i, y_i\rangle_{i = 0}^{n-k}$. But this is obvious, because $L(x)$ interpolates $\langle x_i, y_i\rangle_{i = 0}^{n-k}$ by construction and 
    $$\left( \sum_{j=0}^{k-1} c_j x^j \right) \prod_{i=0}^{n-k} (x-x_i)$$ 
    vanishes on  $\{x_0, x_1, \dotsc, x_{n-k}\}$.
\end{proof}

We shall also write $\mathbf{K}[x] = \bigcup_n \mathbf{K}_n[x]$ for the linear space of all polynomials in variable $x$ and, similarly $$\mathbf{K}(\overline{x}, \overline{y})[x] = \bigcup_{n\geq k}\mathbf{K}(\overline{x}, \overline{y})_n[x]$$ for the affine space of polynomials in variable $x$ interpolating 
$$\tuple{\overline{x}, \overline{y}} = \tuple{x_0, y_0}, \tuple{x_1, y_1}, 
\tuple{x_2, y_2}, \dots, \tuple{x_{k-1}, y_{k-1}}$$

We denote by $\mathbf{K}_{n}^2[x]$ the set of polynomials of degree at most $2n$ that are squares of polynomials in $\mathbf{K}_{n}[x]$. That is:
$$ \mathbf{K}_{n}^2[x] = \{ p(x) \in \mathbf{K}_{2n}[x] \colon \exists_{q(x) \in \mathbf{K}_{n}[x]} \;\; p(x) = q(x)^2 \} $$
Note that every solution to the phaseless interpolation problem for a subfield $\mathbf{K} \subseteq \mathbf{R}$ of the real numbers uniquely lifts to the solution of the interpolation problem of the square of the polynomial.

\section{Polynomial-time phaseless interpolation}

The problem of real phaseless polynomial interpolation of a polynomial of degree at most $n$ from $2n+1$ evaluation points was investigated in \cite{przybylek2020note}. Although their construction carries over to the rational case in a straightforward way, it does not easily generalise to the number of evaluation nodes less than $2n+1$ for the following two reasons. Firstly, their method is based on Lagrange interpolation of the square of a polynomial. Because the square of a polynomial of degree $n$ has degree $2n$, we need exactly $2n+1$ points for Lagrange interpolation procedure to work. Secondly, if we have less than $2n+1$ points, a polynomial might not be uniquely determined (up to a global phase) from the absolute-value evaluations.

Addressing the second issue, it was shown in \cite{przybylek2020note} that the adaptation is necessary if we restrict to $n+2$ rational information operations. Moreover, it is extremely difficult to hit rational $n+2$ nodes such that the reconstruction problem is ambiguous (in the real case, we have to hit a set of measure zero). Therefore, one may wonder if we can non-adaptively choose $n+3$ nodes, such that the phaseless interpolation is possible. Or more generally, what is the minimal number $k$ of non-adaptively chosen nodes $(x_i)_{i=0}^{k-1}$, such that the phaseless interpolation from $(x_i)_{i=0}^{k-1}$ gives a unique (up to a phase) polynomial. From \cite{przybylek2020note}, we know that $n+2 < k \leq 2n+1$. Contrary to our initial intuition, the next theorem shows that the minimal number of nodes is, in fact, $2n+1$.

\begin{theorem}[$2n$ points are not sufficient]\label{t:adaptation:necessary}
    Let $x_0, x_1, \cdots, x_{2n-1}$ be any $2n$ evaluation points. There exist polynomials $p, q \in \mathbf{Q}_n[x]$ such that $|p(x_i)| = |q(x_i)|$ for all $0 \leq i < 2n$, but $|p| \neq |q|$.
\end{theorem}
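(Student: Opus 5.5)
The plan is to build $p$ and $q$ directly from the factorisation $p^2-q^2=(p-q)(p+q)$. Over $\mathbf{Q}\subseteq\mathbf{R}$ we have $|p(x_i)|=|q(x_i)|$ exactly when $p(x_i)^2=q(x_i)^2$, that is, when $(p-q)(p+q)$ vanishes at $x_i$. So it suffices to make $p-q$ vanish at half of the nodes and $p+q$ vanish at the other half. Each of these has degree at most $n$, so each can absorb exactly $n$ roots.

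Concretely, split the $2n$ nodes into two groups of $n$ each, $\{x_0,\dots,x_{n-1}\}$ and $\{x_n,\dots,x_{2n-1}\}$. Define
$$a(x)=\prod_{i=0}^{n-1}(x-x_i),\qquad b(x)=\prod_{i=n}^{2n-1}(x-x_i).$$
Both lie in $\mathbf{Q}_n[x]$, since the nodes are rational. Then set
$$p=\tfrac12\bigl(b+a\bigr),\qquad q=\tfrac12\bigl(b-a\bigr).$$
These are in $\mathbf{Q}_n[x]$, and they satisfy $p-q=a$ and $p+q=b$. Hence $p^2-q^2=ab$, which vanishes at every $x_i$, so $|p(x_i)|=|q(x_i)|$ for all $0\le i<2n$. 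The construction also works if some nodes repeat, because the product still vanishes at each listed node. For $n=0$ the products are empty, giving $a=b=1$, $p=1$ and $q=0$.

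It remains to show $|p|\neq|q|$ as functions on $\mathbf{Q}$. Suppose $|p|=|q|$. Then $p^2-q^2$ vanishes on all of $\mathbf{Q}$, which is infinite, so $p^2=q^2$ as polynomials. Since $\mathbf{Q}[x]$ is an integral domain, this forces $p=q$ or $p=-q$. The first would mean $a=p-q=0$, and the second would mean $b=p+q=0$. Neither holds, because $a$ and $b$ are monic.

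The only step needing care is this last one, where equality of absolute values as functions is turned into $p=\pm q$. It is handled by the remark in Section 2 together with the fact that a nonzero polynomial has only finitely many roots. Everything else is a direct computation.
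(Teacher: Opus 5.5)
Your proof is correct and is essentially the paper's construction: the paper takes $p=a+b$ and $q=a-b$, which are your $p$ and $q$ scaled by $2$ and $-2$. It checks the sign relation node by node, where you use $p^2-q^2=ab$. Your argument for $|p|\neq|q|$ (via $p^2=q^2\Rightarrow p=\pm q$) fills in the step the paper dismisses as ``clearly'', and your remark on repeated nodes replaces its assumption that the nodes are distinct.
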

\begin{proof}
    Without loss of generality, let us assume that the nodes $x_i$ are pairwise distinct. Then polynomials: $p(x) = \prod_{i=0}^{n-1} (x-x_i) + \prod_{i=n}^{2n-1} (x-x_i)$ and: $q(x) = \prod_{i=0}^{n-1} (x-x_i) - \prod_{i=n}^{2n-1} (x-x_i)$ are of order $n$. Observe, that for $0 \leq i < n$ we have that $p(x_i) = -q(x_i)$, because the first terms in both of the polynomials are zero, and for $n \leq i < 2n$ we have that $p(x_i) = q(x_i)$, because the second terms in both of the polynomials are zero. Therefore, $|p(x_i)| = |q(x_i)|$ for every $0 \leq x < 2n$. On the other hand, clearly, $|p| \neq |q|$, which completes the proof.
\end{proof}

\begin{example}[Cubic polynomials on six nodes]
Consider the following evaluation points: $\overline{x} = [1, 2, 3, 4, 5, 6]$. Polynomials $p, q \in \mathbf{Q}_3[x]$ corresponding to these nodes are:
\begin{eqnarray*}
    p(x) &=& 2x^3 - 21x^2 + 85x - 126 \\
    q(x) &=& 9x^2 - 63x + 114 \\
\end{eqnarray*}

Figure \ref{fig:intersection} illustrates $|p|$ and $|q|$ and their intersections. 
\begin{figure}[H]
  \centering
  \includegraphics[width=0.8\textwidth]{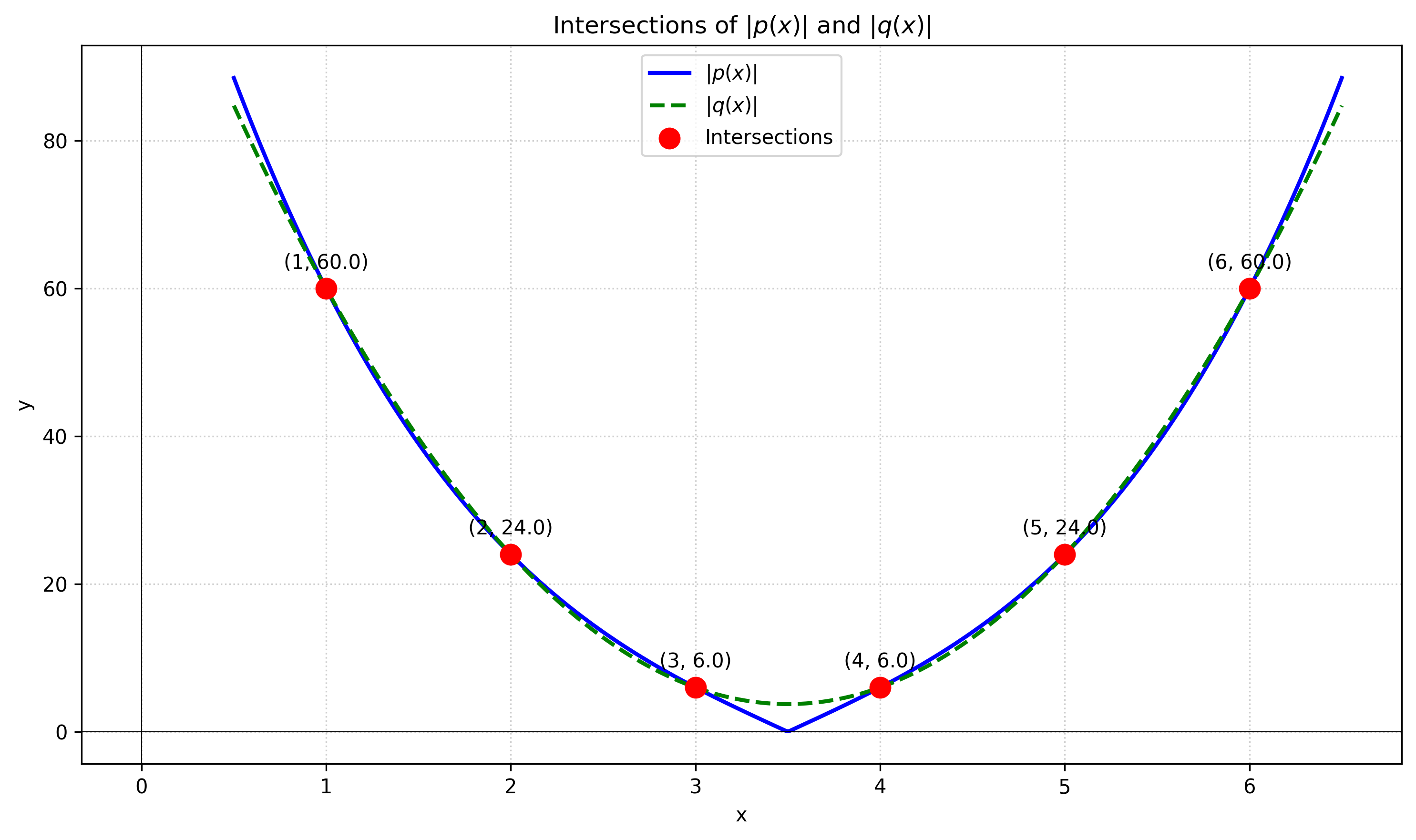}  
  \caption{Intersections of $|p(x)| = |2x^3 - 21x^2 + 85x - 126|$ and $|q(x)| = |9x^2 - 63x + 114|$.}
  \label{fig:intersection}
\end{figure}
\end{example}
The paper \cite{przybylek2020note} demonstrated only an exponential-time procedure for the selection of a single node from previously observed absolute values, such that the reconstruction problem is unambiguous. Therefore, for a polynomial-time reconstruction algorithm from less than $2n+1$ evaluation points, we need to improve the procedure of node-selection to work in polynomial-time. We shall address this issue below (Theorem~\ref{t:adaptation}), by showing that it is possible to select a single node adaptively in a polynomial time.

Addressing the first issue, we show that for a fixed $k \leq n$ there are at most polynomially-many solutions to the reconstruction problem from $2n-k+1$ absolute values and all of them can be found in a polynomial-time (Theorem~\ref{t:poly:time}). 

So, let $k \in \mathbb{N}$ be a fixed constant and consider a sequence $\tuple{x_0, y_0}, \tuple{x_1, y_1}, \dotsc, \tuple{x_{2n-k}, y_{2n-k}}$ for $n \geq k$, where $x_i$ are pairwise distinct and $y_i \geq 0$. Our task is to find all polynomials $q(x) \in \mathbf{Q}_n[x]$ such that $|q(x_i)| = y_i$ for all $i=0, \dots, 2n-k$.

We observe that the condition $|q(x_i)| = y_i$ is equivalent to $q(x_i)^2 = y_i^2$. Our first task is to find all polynomials $p \in \mathbf{Q}_{2n}[x]$ that interpolate given points $\tuple{x_0, y_0}, \tuple{x_1, y_1}, \dotsc, \tuple{x_{2n-k}, y_{2n-k}}$. By Lemma~\ref{l:affine} this set forms a $k$-dimensional affine space $\mathbf{Q}_{2n}(\overline{x}, \overline{y})[x]$, which is parametrised via $p(x, \cbold)$. 

Our task can be now rephrased as follows: find all vectors $\cbold = [c_0, c_1, \dots, c_{k-1}] \in \mathbf{Q}^k$ such that $p(x, \cbold) \in \mathbf{Q}_{n}^2[x]$. Observe, which will be crucial for the proof, that there cannot be to many such vectors $\cbold$.

\begin{lemma}[On the zero-dimensionality of solution sets]\label{l:zero:dim}
    If $n \geq k$ then there are only finitely many  $\cbold = [c_0, c_1, \dots, c_{k-1}] \in \mathbf{Q}^k$ such that $p(x, \cbold) \in \mathbf{Q}_{n}^2[x] $.
\end{lemma}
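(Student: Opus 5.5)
The plan is a direct counting argument. It reduces the finiteness of the set of admissible $\cbold$ to the finiteness of the set of admissible square roots $q$, and uses sign patterns at the nodes to count those.

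First, suppose $\cbold \in \mathbf{Q}^k$ is such that $p(x,\cbold) \in \mathbf{Q}_{n}^2[x]$. Then $p(x,\cbold) = q(x)^2$ for some $q \in \mathbf{Q}_n[x]$. Since $p(x_i,\cbold) = y_i^2$ for all $0 \le i \le 2n-k$, we get $q(x_i)^2 = y_i^2$, i.e.\ $q(x_i) = \varepsilon_i y_i$ for some sign $\varepsilon_i \in \{-1,1\}$.

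Second, I use $n \ge k$: it gives $2n-k+1 \ge n+1$, so there are at least $n+1$ pairwise distinct nodes $x_0,\dots,x_n$. A polynomial of degree at most $n$ is uniquely determined by its values at $n+1$ distinct points, via Lagrange interpolation as in Lemma~\ref{l:affine} with $k=0$. Hence $q$ is determined by the sign vector $(\varepsilon_0,\dots,\varepsilon_n) \in \{-1,1\}^{n+1}$. So there are at most $2^{n+1}$ candidate polynomials $q$, and therefore at most $2^{n+1}$ (in fact $2^{n}$, since $q$ and $-q$ give the same square) candidate polynomials $p = q^2$. If some $y_i=0$, several sign vectors merely give the same $q$, which only decreases the count.

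Third, I check that distinct $\cbold$ give distinct $p(x,\cbold)$. This holds because the parametrisation of Lemma~\ref{l:affine} (applied with $2n$ in place of $n$) is an isomorphism. Concretely, if $p(x,\cbold) = p(x,\cbold')$, then $\bigl(\sum_j (c_j - c'_j)x^j\bigr)\prod_{i=0}^{2n-k}(x-x_i) = 0$. Since $\mathbf{Q}[x]$ is an integral domain and the product is nonzero, this forces $\cbold = \cbold'$. Thus the number of admissible $\cbold$ is at most the number of admissible $p$, which is finite.

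I expect no real obstacle. The only points needing care are that the hypothesis $n \ge k$ is exactly what supplies $n+1$ interpolation nodes, and that injectivity of $\cbold \mapsto p(x,\cbold)$ is justified. The resulting bound $2^{n}$ is exponential, but the lemma only asserts finiteness (zero-dimensionality), which is what the subsequent Gröbner-basis argument needs.
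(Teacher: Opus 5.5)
Your proposal is correct and is essentially the paper's argument: the signs of $q$ at the $n+1$ nodes $x_0,\dots,x_n$ (available because $n \ge k$) determine $q$ uniquely, so there are at most $2^{n+1}$ candidate polynomials. You also make explicit two points the paper leaves implicit, namely that $\cbold \mapsto p(x,\cbold)$ is injective and that $q$ and $-q$ give the same square, which sharpens the bound to $2^n$.
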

\begin{proof}
    If $n \geq k$, then we have at least $n+1$ interpolation points $\tuple{x_0, y_0}, \tuple{x_1, y_1}, \dotsc, \tuple{x_n, y_n}$. Note that $|p(x_i)| = y_i$ can be rephrased as: there exists $b_i \in\{-1, 1\}$ such that $p(x_i) = b_i y_i$. Moreover, for every such a vector $b \in \{-1, 1\}^{n+1}$ there exists a unique polynomial of degree at most $n$ interpolating $\tuple{x_0, b_0y_0}, \tuple{x_1, b_1y_1}, \dotsc, \tuple{x_n, b_ny_n}$. Therefore, polynomials satisfying the considered condition are defined by vectors $b \in \{-1, 1\}^{n+1}$. Since there are exactly $2^{n+1}$ such vectors, there are at most $2^{n+1}$ such polynomials, and so the number of solutions is bounded by $2^{n+1}$.
\end{proof}

If $p(x, \cbold) \in \mathbf{Q}_{n}^2[x] $ then it must be of the form $p(x, \cbold) = (a_0 + a_1 x + \cdots + a_n x^n)^2$ for some rational coefficients $a_i$. On the other hand, expanding $p(x, \cbold)$ we get $p(x, \cbold) = A_0(\cbold) + A_1(\cbold)x + \cdots + A_{2n}(\cbold)x^{2n}$, where the coefficients $A_i(\cbold)$, when treated as functions of $\cbold$, are \emph{affine} functions. Equating these two expansions, we get the following system of $2n+1$ polynomial equations in variables $c_0, c_1, \dotsc, c_{k-1}, a_0, a_1, \dotsc, a_n$:
\begin{align*}
A_i(\cbold) &= \left\{
  \begin{array}{ll}
    a_0^2 & \textnormal{if}\  i = 0  \\
    2a_0a_{i} + \sum_{j=1}^{i-1} a_{j}a_{i-j} & \textnormal{if}\  
    1 \leq i \leq 2n  \\
  \end{array} \right.
\end{align*}
where, for convenience, we set $a_{n+i} = 0$ for $1 \leq i \leq n$. Note, however, that the variables $a_1, a_2, \dotsc, a_n$ can be eliminated at the expense of moving to the \emph{rational} system of equations. We keep the equation $A_0(\cbold) = a_0^2$ and, assuming $a_0 \neq 0$, we use the equations for $1 \leq i \leq n$ to iteratively compute $a_i$ as follows:
$$a_i = \frac{A_{i}(\cbold) - \sum_{j=1}^{i-1} a_{j}a_{i-j}}{2a_0}$$
Then each $a_i$ becomes a rational function of $\cbold$ and $a_0$. Nonetheless, we can improve this slightly. Observe that without loss of generality we can assume that $0$ is among our interpolation points. For, let $\tuple{x_i, y_i}$ be any phaseless interpolation point with $y_i \neq 0$. There must be such a point, because a polynomial of degree $n$ cannot have more than $n$ roots. Thus, we may shift our affine space by $x_i$, i.e.~we consider the phaseless interpolation problem for $\tuple{x_0 - x_i, y_0}, \tuple{x_1 - x_i, y_1}, \dotsc, \tuple{x_{2n-k} - x_i, y_{2n-k}}$. Then the $i$-th point has its first coordinate zero and its second coordinate non-zero: we can reconstruct the shifted space of polynomials first, and then shift it back by $x_i$. Therefore, let us assume that $x_i = 0$ and $y_i \neq 0$. Then $p(0, \cbold) = L(0) = y_i^2 = a_0^2$. Therefore, $a_0 = \pm y_i$ corresponding to two global phases. By choosing $a_0 = y_i$ (equivalently $a_0 = -y_i$), our coefficients $a_i$ become polynomials in $\cbold$ of degree at most $i$.
Computed in the above way $a_i$ are substituted into remaining $n$ equations for $n < i \leq 2n$:
$$A_i(\cbold) = 2a_0a_{i} + \sum_{j=1}^{i-1} a_{j}a_{i-j}$$
Therefore, we are left with a system $S$ of $n$ polynomial equations of degree at most $2n$ in $k$ variables $c_0, c_1, \cdots, c_{k-1}$.
\begin{example}[Phaseless interpolation with $k=1$]\label{e:phaseless}
    Consider the problem of phaseless reconstruction of polynomial $q \in \mathbf{Q}_3[x]$ form the following evaluation nodes $x_i = i-3$ for $0 \leq i \leq 5$ with the corresponding absolute values: $y_0 = 8, y_1 = 2, y_2 = 2, y_3 = 2, y_4 = 4, y_5=2$ and $k=1$. Then:
    $$p(x, \cbold) = 4+(12c_0+4)x+(4c_0+8)x^2+(-15c_0+3)x^3+(-5c_0-2)x^4+(3c_0-1)x^5+c_0x^6$$
    Unfolding coefficients $a_i$ and fixing negative $a_0$ yields:
\begin{align*}
a_0 &= -2 \\
a_1 &= -3 c_0 - 1 \\
a_2 &= \frac{9 c_0^2 + 2 c_0 - 7}{4} \\
a_3 &= \frac{-27 c_0^3 - 15 c_0^2 + 49 c_0 + 1}{8} \\
a_4 &= \frac{405 c_0^4 + 324 c_0^3 - 650 c_0^2 - 156 c_0 + 77}{64} \\
a_5 &= \frac{-1701 c_0^5 - 1755 c_0^4 + 2826 c_0^3 + 1542 c_0^2 - 853 c_0 - 59}{128} \\
a_6 &= \frac{15309 c_0^6 + 19278 c_0^5 - 26325 c_0^4 - 22924 c_0^3 + 11707 c_0^2 + 3374 c_0 - 419}{512}
\end{align*}
Therefore, solutions to the reconstruction problem (with negative constant term) are tantamount to solutions of the following system of polynomial equations:
\begin{align*}
    \frac{405 c_0^4 + 324 c_0^3 - 650 c_0^2 - 156 c_0 + 77}{64} &=&0\\
\frac{-1701 c_0^5 - 1755 c_0^4 + 2826 c_0^3 + 1542 c_0^2 - 853 c_0 - 59}{128} &=&0\\
\frac{15309 c_0^6 + 19278 c_0^5 - 26325 c_0^4 - 22924 c_0^3 + 11707 c_0^2 + 3374 c_0 - 419}{512} &=&0
\end{align*}
The polynomials together with their roots are presented on Figure~\ref{fig:roots}. Notice that there is a single common root at $c_0 =1$, but at $c_0 \approx -1.6$, the roots are distinct, one may compute the root of the first polynomial around this point to be (yes, it is a real root):
$$c_0 = -\frac{3}{5} - \frac{1}{3}\sqrt[3]{\frac{632}{375} + \frac{8 i \sqrt{9151959}}{6075}} - \frac{1792}{2025 \sqrt[3]{\frac{632}{375} + \frac{8 i \sqrt{9151959}}{6075}}}$$
\begin{figure}[H]
  \centering
  \includegraphics[width=0.8\textwidth]{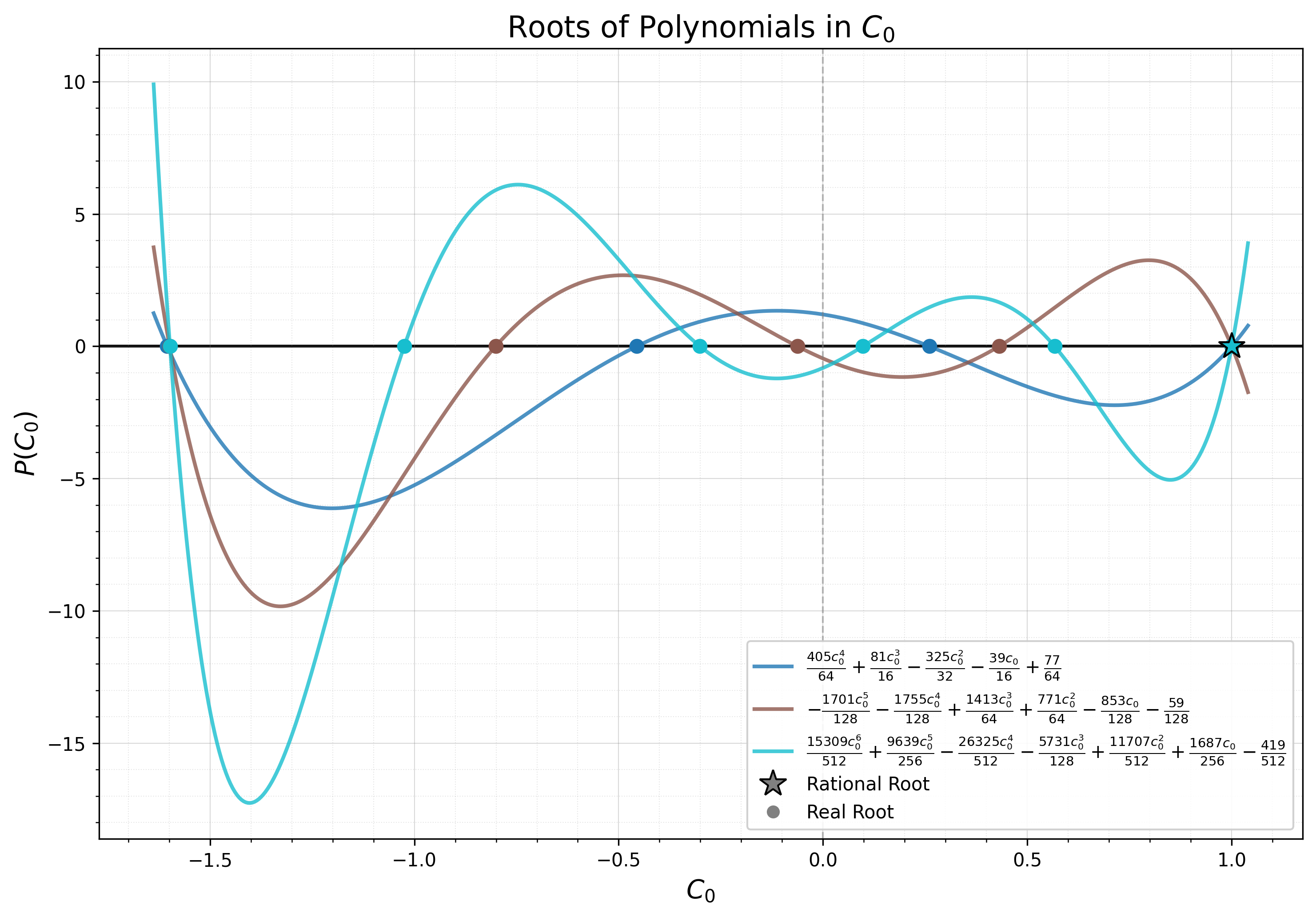}  
  \caption{Rational and all real roots for polynomials $a_4, a_5, a_6$ from Example~\ref{e:phaseless}.}
  \label{fig:roots}
\end{figure}
Therefore, we have a single solution (up to a phase) at $c_0 = 1$:
\begin{align*}
a_0 &= -2 \\
a_1 &= -4 \\
a_2 &= 1 \\
a_3 &= 1 \\
\end{align*}
That is: $q(x) = -2 -4x + x^2 + x^3$.
\end{example}

Let us fix the order $c_0 < c_1 < \cdots < c_{k-1}$ on the variables and let $\mathcal{B}$ be the Gröbner basis for $S$ for the lexicographical ordering. We need the following standard Lemma \cite{becker1993grobner}, \cite{cox1997ideals}, \cite{cox1998using}.

\begin{lemma}[Univariate polynomial in Gröbner basis]\label{l:triangular}
    Let $\mathcal{B}$ be a zero-dimensional Gröbner basis for the lexicographical ordering and $\overline{x} = x_1 < x_2 < \cdots < x_n$. Then $\mathcal{B}$ contains a univariate polynomial in $x_1$.
\end{lemma}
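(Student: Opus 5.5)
The plan is to use the standard characterisation of zero-dimensional ideals together with the elimination property of lexicographic Gröbner bases. Let $I$ be the ideal generated by $\mathcal{B}$ in $\mathbf{K}[x_1,\dots,x_n]$. Saying that $\mathcal{B}$ is zero-dimensional means that the variety $V(I)$ over the algebraic closure $\overline{\mathbf{K}}$ is finite. Equivalently, the quotient $\mathbf{K}[x_1,\dots,x_n]/I$ is a finite-dimensional $\mathbf{K}$-vector space. I will take the latter as the working definition, and note the equivalence via the Finiteness Theorem in \cite{cox1997ideals}.

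\textbf{Step 1: the elimination ideal $I\cap\mathbf{K}[x_1]$ is nonzero.} Since the quotient is finite-dimensional, say of dimension $d$, the $d+1$ classes $1, x_1, x_1^2, \dots, x_1^d$ are linearly dependent modulo $I$. This gives a nonzero univariate polynomial $f(x_1)\in I$.

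If one starts instead from finiteness of $V(I)$, the same conclusion follows from the Nullstellensatz. The points of $V(I)$ have finitely many first coordinates $\alpha_1,\dots,\alpha_r$. The polynomial $g=\prod_j(x_1-\alpha_j)$ vanishes on $V(I)$, so $g^m\in I\overline{\mathbf{K}}[\overline{x}]$ for some $m$. Then $f = \prod_\sigma \sigma(g^m)$, taken over conjugates, or simply the minimal polynomial of $x_1$ acting on the quotient, lies in $I\cap\mathbf{K}[x_1]$ and is nonzero. The linear-algebra route is cleaner, so I will use it.

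\textbf{Step 2: lex elimination.} The ordering puts $x_1$ smallest, so $x_1$ is the last variable to be eliminated. The Elimination Theorem says that $\mathcal{B}\cap\mathbf{K}[x_1]$ is a Gröbner basis of $I\cap\mathbf{K}[x_1]$. I will recall the argument briefly. Take $f\in I\cap\mathbf{K}[x_1]$ nonzero. Its leading monomial $x_1^e$ is divisible by the leading monomial of some $b\in\mathcal{B}$. That leading monomial must therefore be a pure power of $x_1$.

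Under lex with $x_1$ least, any monomial containing $x_2,\dots,x_n$ is greater than every pure power of $x_1$. Hence if $\mathrm{LM}(b)$ is a power of $x_1$, all terms of $b$ are powers of $x_1$. So $b\in\mathbf{K}[x_1]$ is a univariate polynomial in $x_1$.

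It is also nonconstant, unless $I$ is the unit ideal. In the unit-ideal case $\mathcal{B}$ contains a nonzero constant, which I regard as degenerate (the system has no solutions). I will remark on this case explicitly.

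\textbf{Main obstacle.} The only subtle point is the definitional one in Step 1. The finite-dimensionality of the quotient must be tied to the notion of "zero-dimensional" being used, which is over $\overline{\mathbf{K}}$ rather than just over $\mathbf{Q}$. Lemma~\ref{l:zero:dim} only bounds the rational solutions, so I will state the lemma under the algebraic hypothesis and cite the Finiteness Theorem for the equivalence. I will not reprove it.
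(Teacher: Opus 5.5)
Your proposal is correct and follows essentially the same route as the paper: finite-dimensionality of $\mathbf{Q}[\overline{x}]/I$, linear dependence of the classes of $1, x_1, \dots, x_1^d$ to obtain a nonzero element of $I\cap\mathbf{Q}[x_1]$, and the lex Elimination Theorem to conclude that $\mathcal{B}\cap\mathbf{Q}[x_1]$ is nonempty. The only difference is that you justify finite-dimensionality by citing the Finiteness Theorem. That is sounder than the paper's interpolation-basis argument, which tacitly assumes $I$ is radical and that its finitely many zeros are rational.
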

\begin{proof}
Let $I = \tuple{\mathcal{B}}$ be the ideal generated by $\mathcal{B}$ and denote by $V = \Q[\overline{x}]/I$ the quotient of the ring of polynomials in variables $x_1, x_2, \dotsc, x_n$ by ideal $I$. Observe that $V$ treated as a vector space over $\Q$ is finite dimensional. Two polynomials $p, q \in \Q[\overline{x}]$ are equal in $V$ if $p - q \in I$, that is, if $p(s) = q(s)$ for ever $s$ such that $I(s)=0$. Because $\mathcal{B}$ is zero-dimensional, there are only finitely many such $s$, say $s_1, s_2, \dotsc, s_k$. We claim that polynomials $\lambda_i(\overline{x}) = \prod_{j \neq i} \frac{\overline{x} - s_j}{s_i - s_j}$ form a basis of $V$. They are linearly independent since for a given point $s_i$ only $p_i(s_i) \neq 0$. Now, consider any $p \in \Q[\overline{x}]$. Then, by definition of $V$, polynomial $p$ is equal to $\sum_i p(s_i)\lambda_i$ in $V$. Therefore $\lambda_i$ span $V$, and so $V$ is $k$-dimensional.

Consider the following univariate polynomial in variable $x_1$: $1 + x_1 + x_1^2 + \dotsc + x_1^k$, which is a sum of $k+1$ vectors. Because $V$ is $k$-dimensional, there are coefficients $[a_0, a_1, \dotsc, a_k] \neq 0$ such that $p(x_1) = a_0 + a_1x_1 + a_2x_1^2 + \dotsc + a_kx_1^k = 0$ in $V$. Therefore $p \in I$, hence $I \cap \Q[x_1] \neq \emptyset$. By the Elimination Theorem for Gröbner basis, $\mathcal{B} \cap \Q[x_1]$ is the Gröbner basis for $I \cap \Q[x_1]$, therefore $\mathcal{B} \cap \Q[x_1] \neq \emptyset$.
\end{proof}

By Lemma~\ref{l:zero:dim} the basis $\mathcal{B}$ is zero-dimensional, and by Lemma~\ref{l:triangular} it contains an univariate polynomial $u(c_0)$. Because the degree of $u(c_0)$ is bounded by $2n$, it has at most $2n$ solutions $c_0^*$. 

Now we repeat the above process for every variable $c_i$ for $i > 0$. For each solution $c_{i-1}^*$ from the previous solution set (i.e.~starting from $c_0^*$) we substitute $c_{i-1}^*$ for its corresponding variable $c_i$ in $\mathcal{B}$, compute the new basis $\mathcal{B}[c_{i-1}^*/c_{i-1}]$ for polynomial equations of one less variables and find the roots of the univatiate polynomial in the variable $c_i$.

Each sequence $c_0^*, c_1^*, \dotsc, c_{k-1}^*$ found in the above process is tantamount to an interpolating polynomial. Notice that there are at most $(2n)^{k}$ such sequences, thus for a fixed parameter $k$ there are only polynomially-many solutions to the phaseless interpolation problem. This observation yields the following Lemma.

\begin{lemma}[On the number of solutions to the phaseless interpolation problem]\label{l:poly:sol}
    Let $k \in \mathbb{N}$ be a fixed constant. For $n \geq k$ and a sequence $\tuple{x_0, y_0}, \tuple{x_1, y_1}, \dotsc, \tuple{x_{2n-k}, y_{2n-k}}$, where $x_i$ are pairwise distinct and $y_i \geq 0$, the number of polynomials $q(x) \in \mathbf{Q}_n[x]$ such that $|q(x_i)| = y_i$ is $O(n^k)$.
\end{lemma}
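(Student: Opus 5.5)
We reduce counting phaseless interpolants to counting points of a finite algebraic set, and bound that count via the triangular structure of a lexicographic Gröbner basis, as prepared in the preceding discussion.

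\textbf{Step 1: pass to squares.} Every $q \in \mathbf{Q}_n[x]$ with $|q(x_i)| = y_i$ yields $q^2 \in \mathbf{Q}_{2n}[x]$ interpolating $\tuple{x_i, y_i^2}$. Conversely, $q^2$ determines $q$ up to sign. So it suffices to bound the number of $\cbold \in \mathbf{Q}^k$ with $p(x,\cbold) \in \mathbf{Q}^2_n[x]$ and multiply by $2$.

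\textbf{Step 2: normalise and reduce to $k$ variables.} By Lemma~\ref{l:affine}, applied with $2n$ in place of $n$, these squares are parametrised by $\cbold$. After the shift making $0$ a node with nonzero value, we fix $a_0 = y_i$. Then $a_1,\dots,a_n$ are polynomials in $\cbold$ of degree at most $n$, and the square condition becomes the system $S$ of $n$ equations in $c_0,\dots,c_{k-1}$. Each equation has degree at most $2n$ (degree at most $i \le 2n$ in the $i$-th equation). By Lemma~\ref{l:zero:dim}, $S$ has only finitely many rational solutions. The complex variety of $S$ also has to be finite, so that the Gröbner basis $\mathcal{B}$ is zero-dimensional. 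This is the delicate point, discussed below.

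\textbf{Step 3: count fibrewise.} By Lemma~\ref{l:triangular}, $\mathcal{B}$ contains a univariate $u(c_0)$. Its roots bound the admissible values $c_0^*$. For each such value, we specialise and repeat in $c_1$, and so on. If every univariate polynomial encountered has degree $O(n)$, the number of tuples $(c_0^*,\dots,c_{k-1}^*)$ is $O(n)^k = O(n^k)$, since $k$ is constant. The paper asserts the bound $2n$ on these degrees.

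\textbf{Main obstacle.} Neither the zero-dimensionality of $S$ over $\mathbf{C}$ nor the $O(n)$ degree bound on the eliminants follows directly. Lemma~\ref{l:zero:dim} controls only rational points, and Bézout bounds the eliminant degree only by $(2n)^k$ in general, not by $2n$ per variable.

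I would avoid this by a direct counting argument that bypasses Gröbner degrees. Any solution $q$ agrees, up to sign, with $y_i$ at $2n-k+1$ nodes. Compare two solutions $q$ and $q'$: at each node, either $q - q'$ or $q + q'$ vanishes. So the nodes split into sets $Z_- \cup Z_+$ with $|Z_-| + |Z_+| \ge 2n-k+1$, and each of $q \mp q'$, having degree at most $n$, has at most $n$ roots unless it is identically zero. Hence if $q' \neq \pm q$, both $|Z_-|$ and $|Z_+|$ lie in $[n-k+1, n]$.

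I would then bound the number of sign patterns $b \in \{-1,1\}^{2n-k+1}$ that can be realised by some degree-$n$ interpolant, relative to a fixed reference solution. Such a pattern, restricted to the first $n+1$ nodes, determines the interpolant. The remaining $n-k$ nodes must then be hit with the correct magnitude, and each such node imposes a genuine constraint. The hope is that the admissible patterns form a set of size $O(n^k)$, because a solution is pinned down by choosing at most $k$ exceptional nodes.

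If this combinatorial route stalls, I would fall back on the algebraic one. That means proving that the eliminant of $S$ with respect to $c_0$ divides a polynomial of degree $O(n)$, obtained by taking resultants of the two lowest-degree equations of $S$ after specialising the other variables. This yields the per-variable bound required in Step 3.
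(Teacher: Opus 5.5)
Your Steps 1--3 follow the paper's own argument: a zero-dimensional lex Gr\"obner basis, a univariate eliminant of degree $\le 2n$ at each stage, and hence $(2n)^k$ tuples. You are right that the paper only asserts the bound $\deg u\le 2n$, and its analogue after each specialisation, without deriving it. But your proposal does not supply a replacement, so the lemma is not proved.

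\begin{itemize}
\item The combinatorial paragraph stops at ``the hope is''. Nothing in it selects $k$ nodes that determine a solution. Relative to a reference solution, the disagreement set $Z_+$ could a priori be any of exponentially many subsets of size between $n-k+1$ and $n$. The remark that each extra node ``imposes a genuine constraint'' is a dimension heuristic; it does not count points of $\{-1,1\}^{n+1}$.
\item The algebraic fallback fails as stated. Specialising $c_1,\dots,c_{k-1}$ to numbers does not produce an element of $I\cap\mathbf{Q}[c_0]$. Moreover, a resultant of the two lowest-degree equations of $S$ (degrees up to $n+1$ and $n+2$) has degree up to $(n+1)(n+2)$, not $O(n)$.
\end{itemize}

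Both of your obstacles can in fact be removed.

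\begin{itemize}
\item Zero-dimensionality is not delicate. If $\mathbf{c}\in\mathbf{C}^k$ solves $S$, the reconstructed $q=\sum a_i(\mathbf{c})x^i\in\mathbf{C}_n[x]$ satisfies $q^2=p(x,\mathbf{c})$. Hence $q(x_i)^2=y_i^2$, so $q(x_i)=\pm y_i$, and $q$ is one of at most $2^{n+1}$ rational interpolants. Thus the proof of Lemma~\ref{l:zero:dim} runs verbatim over $\mathbf{C}$.
\item Once $V_{\mathbf{C}}(S)$ is finite, you need no fibrewise degree bounds at all. Take $k$ generic linear combinations of the equations of $S$, each of degree $\le 2n$. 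They have a finite common zero set containing $V_{\mathbf{C}}(S)$, so Bézout gives $|V_{\mathbf{C}}(S)|\le(2n)^k$ directly. The figure $(2n)^k$ that you dismissed is exactly the bound the lemma needs.
\end{itemize}

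Alternatively, your own $Z_\pm$ computation already gives a stronger, purely combinatorial proof.

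\begin{itemize}
\item It says $|\langle b,b'\rangle|=\bigl||Z_-|-|Z_+|\bigr|\le k-1$ for the sign vectors of any two non-associated solutions. Discard nodes with $y_i=0$ first; this only improves the bounds.
\item Take one sign vector per class $\pm q$. This gives $m$ vectors in $\{-1,1\}^N$ with $N=2n-k+1$. Their Gram matrix $G$ satisfies $N\ge\operatorname{rank}G\ge(\operatorname{tr}G)^2/\operatorname{tr}(G^2)\ge mN^2/\bigl(N^2+(m-1)(k-1)^2\bigr)$.
\item Hence $m\le\bigl(N^2-(k-1)^2\bigr)/\bigl(N-(k-1)^2\bigr)$ whenever $N>(k-1)^2$. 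So there are only $O(n)$ solutions for fixed $k$. For the finitely many smaller $n$, the trivial bound $2^{n+1}$ is a constant depending only on $k$.
\end{itemize}
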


\begin{example}[Phaseless interpolation with $k=2$]\label{e:phaseless2d}
Consider the polynomial from Example~\ref{e:phaseless} with evaluation node $x_5 = 3$ removed. Then for $k=2$ we get:
$$p(x, \cbold) = 4 + (4c_{0} + 8)x + (4c_{1} + 8)x^{2} + (-5c_{0} - 2)x^{3} + (-5c_{1} - 2)x^{4} + c_{0}x^{5} + c_{1}x^{6}$$
Unfolding coefficients $a_i$ and fixing negative $a_0$ gives us:
\begin{align*}
a_0 &= -2 \\[4pt]
a_1 &= -c_0 - 2 \\[4pt]
a_2 &= \frac{c_0^2}{4} + c_0 - c_1 - 1 \\[6pt]
a_3 &= \frac{5c_0}{4} 
      -\frac{(c_0 + 2)(c_0^2 + 4c_0 - 4c_1 - 4)}{8} 
      + \frac{1}{2} \\[10pt]
a_4 &= \frac{5c_0^4}{64} 
      + \frac{5c_0^3}{8} 
      - \frac{3c_0^2 c_1}{8} 
      + \frac{c_0^2}{2} 
      - \frac{3c_0 c_1}{2} 
      - 2c_0 \\ 
    &\quad + \frac{c_1^2}{4} 
      + \frac{3c_1}{4} 
      - \frac{3}{4} \\[10pt]
a_5 &= -\frac{7c_0^5}{128} 
      - \frac{35c_0^4}{64} 
      + \frac{5c_0^3 c_1}{16} 
      - \frac{35c_0^3}{32} 
      + \frac{15c_0^2 c_1}{8} \\ 
    &\quad + \frac{23c_0^2}{16} 
      - \frac{3c_0 c_1^2}{8} 
      + c_0 c_1 
      + \frac{5c_0}{2} 
      - \frac{3c_1^2}{4} 
      - 2c_1 \\[10pt]
a_6 &= \frac{21c_0^6}{512} 
      + \frac{63c_0^5}{128} 
      - \frac{35c_0^4 c_1}{128} 
      + \frac{195c_0^4}{128} \\ 
    &\quad - \frac{35c_0^3 c_1}{16} 
      - \frac{5c_0^3}{16} 
      + \frac{15c_0^2 c_1^2}{32} 
      - \frac{105c_0^2 c_1}{32} 
      - \frac{285c_0^2}{64} \\ 
    &\quad + \frac{15c_0 c_1^2}{8} 
      + \frac{23c_0 c_1}{8} 
      - \frac{21c_0}{16} 
      - \frac{c_1^3}{8} \\ 
    &\quad + \frac{c_1^2}{2} 
      + \frac{5c_1}{2} 
      + \frac{15}{16}
\end{align*}
Figure~\ref{fig:roots2d} shows roots of polynomials $a_4(c_0, c_1) = 0$, $a_5(c_0, c_1) = 0$ and $a_6(c_0, c_1) = 0$. There is a single common root at $(c_0, c_1) = (2, 1)$, which yields:
\begin{align*}
a_0 &= -2 \\
a_1 &= -4 \\
a_2 &= 1 \\
a_3 &= 1 \\
\end{align*}
That is: $q(x) = -2 -4x + x^2 + x^3$ again. To prove that $(c_0, c_1) = (2, 1)$, we can compute the Gröbner basis for  $a_4(c_0, c_1) = 0$, $a_5(c_0, c_1) = 0$ and $a_6(c_0, c_1) = 0$ and observe that it consists of two polynomial equations:
\begin{align*}
c_0 - 2 &= 0 \\
c_1 - 1 &= 0 \\
\end{align*}

\begin{figure}[H]
  \centering
  \includegraphics[width=0.8\textwidth]{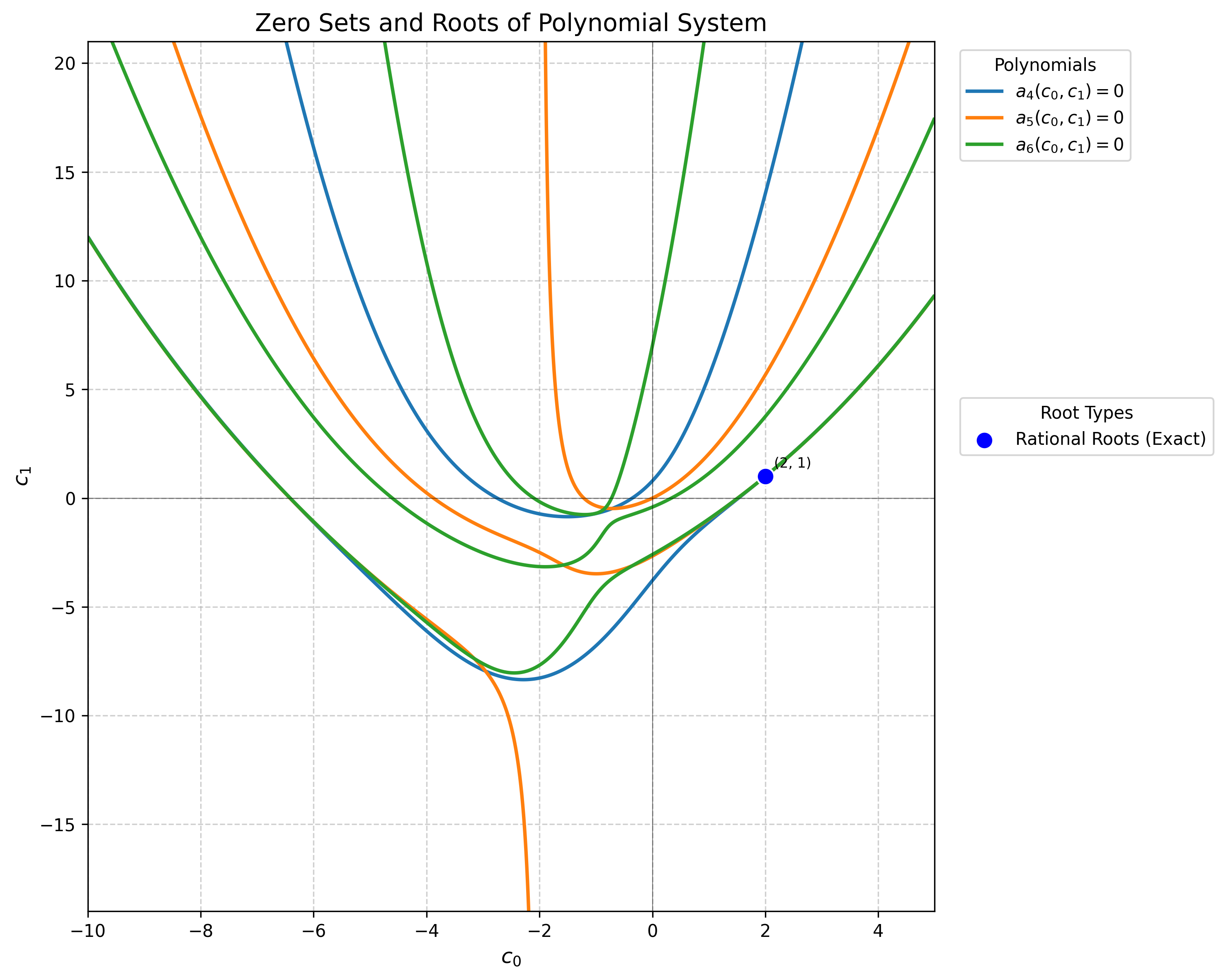}  
  \caption{Real roots of polynomials $a_4, a_5, a_6$ from Example~\ref{e:phaseless2d}. There is a single common root at $(c_0, c_1) = (2, 1)$.}
  \label{fig:roots2d}
\end{figure}

\end{example}

\begin{example}[Phaseless interpolation with $k=3$]\label{e:phaseless3d}
In case $k=3$ we need $n>3$, otherwise there will be a unique interpolating polynomial for every choice of signs in the evaluation values. Let us consider the following polynomial $q \in \mathbf{Q}_4[x]$:
$$q(x) = x^4+x^3-10x^2-4^x+9$$
with evaluation nodes $x_i = i-2$ for $0 \leq i \leq 5$ and $k=3$. Then:
\begin{align*}
p(x, \cbold) &=  81 \\
    &\quad + (-12c_{0} - 60) x \\
    &\quad + (4c_{0} - 12c_{1} - 108) x^{2} \\
    &\quad + (15c_{0} + 4c_{1} - 12c_{2} + 75) x^{3} \\
    &\quad + (-5c_{0} + 15c_{1} + 4c_{2} + 36) x^{4} \\
    &\quad + (-3c_{0} - 5c_{1} + 15c_{2} - 15) x^{5} \\
    &\quad + (c_{0} - 3c_{1} - 5c_{2}) x^{6} \\
    &\quad + (c_{1} - 3c_{2}) x^{7} \\
    &\quad + c_{2} x^{8}
\end{align*}
The solution to the polynomial system $a_5 = 0, a_6 = 0, a_7 = 0, a_8 = 0$ with $a_0=9$ is presented on Figure~\ref{fig:roots3d}. The Gröbner basis of the system consists of the following polynomials:
\begin{align*}
\dfrac{116}{7371} c_0^3 + \dfrac{788}{2457} c_0^2 + \dfrac{92}{63} c_0 + c_2 - \dfrac{2945}{1053} \\
-\dfrac{64}{2457} c_0^3 - \dfrac{1571}{2457} c_0^2 - \dfrac{725}{189} c_0 + c_1 - \dfrac{175}{351} \\
c_0^4 + \dfrac{67}{2} c_0^3 + \dfrac{5649}{16} c_0^2 + \dfrac{8257}{8} c_0 - \dfrac{22715}{16}.
\end{align*}
The roots of the polynomials that form the Gröbner basis are presented on Figure~\ref{fig:roots3dGB}. Notice, there are four unique solutions (up to a phase), which are rational. The solutions yield the following polynomials:
\begin{align*}
(-\frac{59}{4}, -\frac{9}{16}, \frac{81}{16}) & \mapsto 2.25x^4-3.5x^3-11.25x^2+6.5x+9\\
(-11, 1, 1) & \mapsto x^4-x^3-10x^2+4x+9\\
(-\frac{35}{4}, -\frac{25}{16}, \frac{25}{16}) & \mapsto 1.25x^4-2.5x^3-7.25x^2+2.5x+9\\
(1, 5, 1) & \mapsto x^4+x^3-10x^2-4x+9
\end{align*}

\begin{figure}[H]
  \centering
  \includegraphics[width=0.8\textwidth]{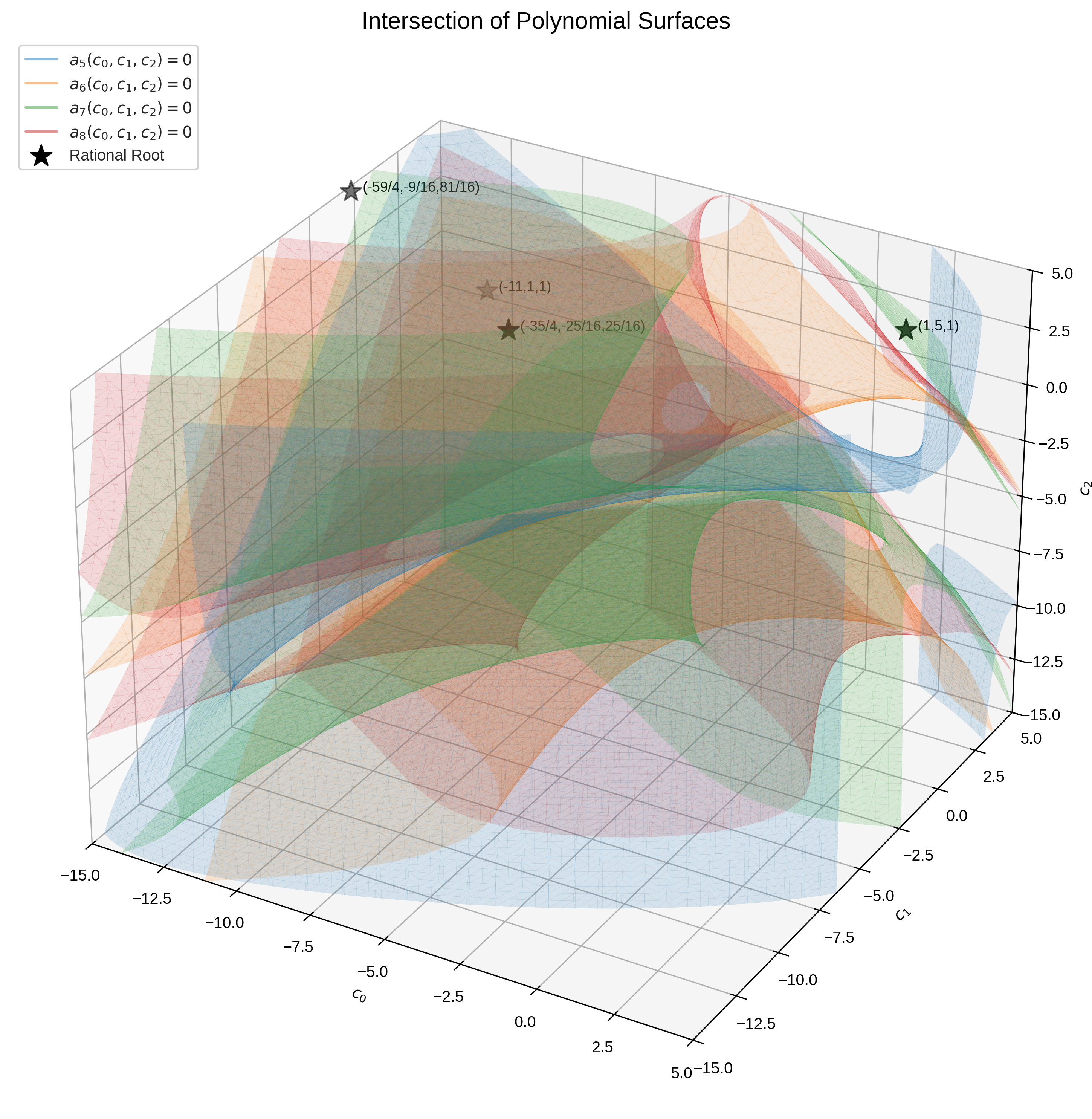}  
  \caption{Real roots of polynomials $a_5, a_6, a_7, a_8$ from Example~\ref{e:phaseless3d}. There are four common roots at $(-\frac{59}{4}, -\frac{9}{16}, \frac{81}{16})$, $(-11, 1, 1)$, $(-\frac{35}{4}, -\frac{25}{16}, \frac{25}{16})$ and $(1, 5, 1)$.}
  \label{fig:roots3d}
\end{figure}

Figure~\ref{fig:2dslices} shows 2D slices of Figure~\ref{fig:roots3d} at four different values of $c_0$ corresponding to four different common roots of polynomials $a_5, a_6, a_7$ and $a_8$. Note, however, that if we change the evaluation point from $x_5=3$ to $x_5=4$, then the system will have a unique solution.    
\end{example}

\begin{theorem}[Polynomial-time reconstruction]\label{t:poly:time}
Let $k \in \mathbb{N}$ be a fixed constant. The following problem has polynomial-time complexity. Given $n \geq k$ and a sequence 
$$\tuple{x_0, y_0}, \tuple{x_1, y_1}, \dotsc, \tuple{x_{2n-k}, y_{2n-k}},$$ 
where $x_i$ are pairwise distinct and $y_i \geq 0$, find all polynomials $q(x) \in \mathbf{Q}_n[x]$ such that $|q(x_i)| = y_i$ for all $i=0, \dots, 2n-k$.
\end{theorem}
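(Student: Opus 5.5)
The plan is to assemble the algorithm already sketched in this section and bound each stage by a polynomial in $n$ and the input bit-size $\beta$. \textbf{Stage 1 (normalisation and parametrisation).} Pick an index $i$ with $y_i\neq 0$; one exists because a nonzero $q$ has at most $n$ roots among the $2n-k+1\geq n+1$ nodes, and if all $y_i=0$ the only answer is $q=0$. Shift all nodes by $-x_i$, so that $0$ is a node with nonzero value. Compute the Lagrange polynomial $L$ interpolating $\tuple{x_j,y_j^2}$, and by Lemma~\ref{l:affine} (with $2n$ in place of $n$) write $p(x,\cbold)$. The coefficients $A_j(\cbold)$ are affine in $\cbold$, with rational coefficients of bit-size polynomial in $n$ and $\beta$, since Lagrange interpolation and the expansion of $\prod(x-x_j)$ are polynomial-time. \textbf{Stage 2 (elimination of the $a_j$).} Fix $a_0=y_i$. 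Use the triangular recurrence $a_j=(A_j(\cbold)-\sum_{l=1}^{j-1}a_la_{j-l})/(2a_0)$ to express $a_1,\dots,a_n$ as polynomials in $\cbold$ of degree at most $j$. Since $k$ is constant, each has $O(n^k)$ monomials, so the system $S$ of the remaining $n$ equations in $k$ variables has polynomial size. Here one must check that the coefficient bit-sizes grow only polynomially. I would argue this via the recurrence depth $n$ and the fact that each step adds $O(\log n+\beta)$ bits per monomial. \textbf{Stage 3 (solving $S$).} By Lemma~\ref{l:zero:dim}, $S$ has finitely many rational solutions.

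The main obstacle is that zero-dimensionality of the \emph{rational} solution set does not imply zero-dimensionality of the complex variety. Moreover, a general lex Gröbner basis computation is not polynomial time in $n$ even for constantly many variables unless degrees stay controlled. I would therefore avoid relying on a full Gröbner basis. Instead I would use the fact that only rational solutions are needed, and proceed variable by variable:
\begin{itemize}
\item For $k=1$, take the gcd of the univariate polynomials in $S$ and find its rational roots in polynomial time (e.g.\ via LLL factorisation over $\Q$).
\item For general constant $k$, eliminate $c_1,\dots,c_{k-1}$ by resultants, or equivalently by Gröbner bases over a constant number of variables of degree $\le 2n$. The resulting elimination polynomial has degree $n^{O(k)}$ and polynomial bit-size, computable by linear algebra on Macaulay matrices of size $n^{O(k)}$. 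If the elimination polynomial vanishes identically (positive-dimensional complex components), first apply a random linear change of coordinates and take generic linear combinations of the equations, which restores a nonzero elimination polynomial while preserving the rational points we seek.
\item Factor the univariate polynomial over $\Q$ with LLL to obtain its $n^{O(k)}$ rational roots $c_0^*$.
\item Substitute each $c_0^*$ and recurse on $k-1$ variables.
\end{itemize}
The tree has $n^{O(k)}$ leaves, which is polynomial for fixed $k$, consistent with Lemma~\ref{l:poly:sol}.

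\textbf{Stage 4 (verification and output).} For each candidate $\cbold^*$, compute $q=\sum a_j(\cbold^*)x^j$, check directly that $|q(x_j)|=y_j$ for all nodes and that the equations $A_j=\ldots$ for $n<j\le 2n$ hold, discard failures, and shift back. Output each survivor, together with $-q$ for the other phase. Correctness follows because every solution $q$ has $q^2\in\Q_{2n}(\overline{x},\overline{y}^2)[x]$, so it equals $p(x,\cbold)$ for a unique rational $\cbold$, which the search finds. The total running time is polynomial in $n$ and $\beta$ for fixed $k$.
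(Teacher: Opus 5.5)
\textbf{Gap in Stage~3.} Your pipeline is the paper's: normalise so that $0$ is a node with $y_i\neq 0$, parametrise via Lemma~\ref{l:affine}, eliminate $a_1,\dots,a_n$ by the triangular recurrence, pass to a univariate eliminant in $c_0$, take rational roots with LLL, and back-substitute. The only change is resultant/Macaulay-matrix elimination in place of a lex Gröbner basis. You are right that the paper silently passes from Lemma~\ref{l:zero:dim}, which is stated for $\cbold\in\Q^k$, to zero-dimensionality of $\mathcal{B}$, which is a statement about complex points. But your fallback for a vanishing eliminant is false. Every linear combination of the equations of $S$ still vanishes on $V_{\mathbf{C}}(S)$, so positive-dimensional components survive. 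After a generic linear change of coordinates, each such component projects dominantly onto the $c_0$-axis, which forces $I\cap\Q[c_0]=0$ rather than rescuing it. More fundamentally, if a rational solution could sit on a positive-dimensional complex component, no elimination trick of this kind would isolate it; a curve can have finitely many rational points. So as written, your algorithm has no correctness argument in exactly the case you flag.

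\textbf{The missing observation.} This case never arises, because the proof of Lemma~\ref{l:zero:dim} works verbatim over $\mathbf{C}$. Suppose $\cbold\in\mathbf{C}^k$ and $p(x,\cbold)=q(x)^2$ with $q\in\mathbf{C}_n[x]$. Then $q(x_j)^2=p(x_j,\cbold)=y_j^2$ forces $q(x_j)=\pm y_j$ at the at least $n+1$ nodes, so there are at most $2^{n+1}$ such $q$. Moreover, $\cbold$ is determined by $q^2$ because the parametrisation of Lemma~\ref{l:affine} is injective. Every point of $V_{\mathbf{C}}(S)$ yields such a $q$ via the recurrence, so $V_{\mathbf{C}}(S)$ is finite. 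Hence the eliminant is nonzero, every specialisation $c_0=c_0^*$ again leaves a finite complex variety, and your fallback can simply be deleted.

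With that fix the rest of your plan goes through, and in places it is more careful than the paper. Your $n^{O(k)}$ bound on the eliminant degree is the right one; a Bézout-type bound gives $(2n)^k$. The paper's claim that $u(c_0)$ has degree at most $2n$ is only justified for $k=1$. You also address coefficient growth in the recurrence and in the elimination step, which the paper's count of $(2n)^{2^k}$ arithmetic operations leaves implicit.
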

\begin{proof}
The process of computing the coefficients $A_i(\cbold)$ of $p(x, \cbold)$ consists of the computation of the coefficients of the interpolating polynomial $L(x)$ and the coefficients of $\prod_{i=0}^{n-k} (x-x_i)$, which are of bit-size polynomial in the bit-size of $\tuple{x_0, y_0}, \tuple{x_1, y_1}, \dotsc, \tuple{x_{2n-k}, y_{2n-k}}$. Therefore, it is polynomial-time. Since each $a_i$ has degree bounded by $2i-1$, the process of computing coefficients of $a_i$ is polynomial in bit-size of coefficients of $A_i(\cbold)$. Therefore, it is in polynomial-time. The number of basic operations needed to compute the Gröbner basis is bounded by $(2n)^{2^{k}}$, hence, for a fixed parameter $k$, its time-complexity is polynomial. By the classical result of A.K.~Lenstra, H.W.~Lenstra and L.~Lov{\'a}sz \cite{lenstra1982factoring}, finding rational roots of a rational polynomial is polynomial-time of the bit-size of the coefficients. Therefore, the whole process is polynomial in the bit-size of $\tuple{x_0, y_0}, \tuple{x_1, y_1}, \dotsc, \tuple{x_{2n-k}, y_{2n-k}}$.

\begin{figure}[H]
  \centering
  \includegraphics[width=0.8\textwidth]{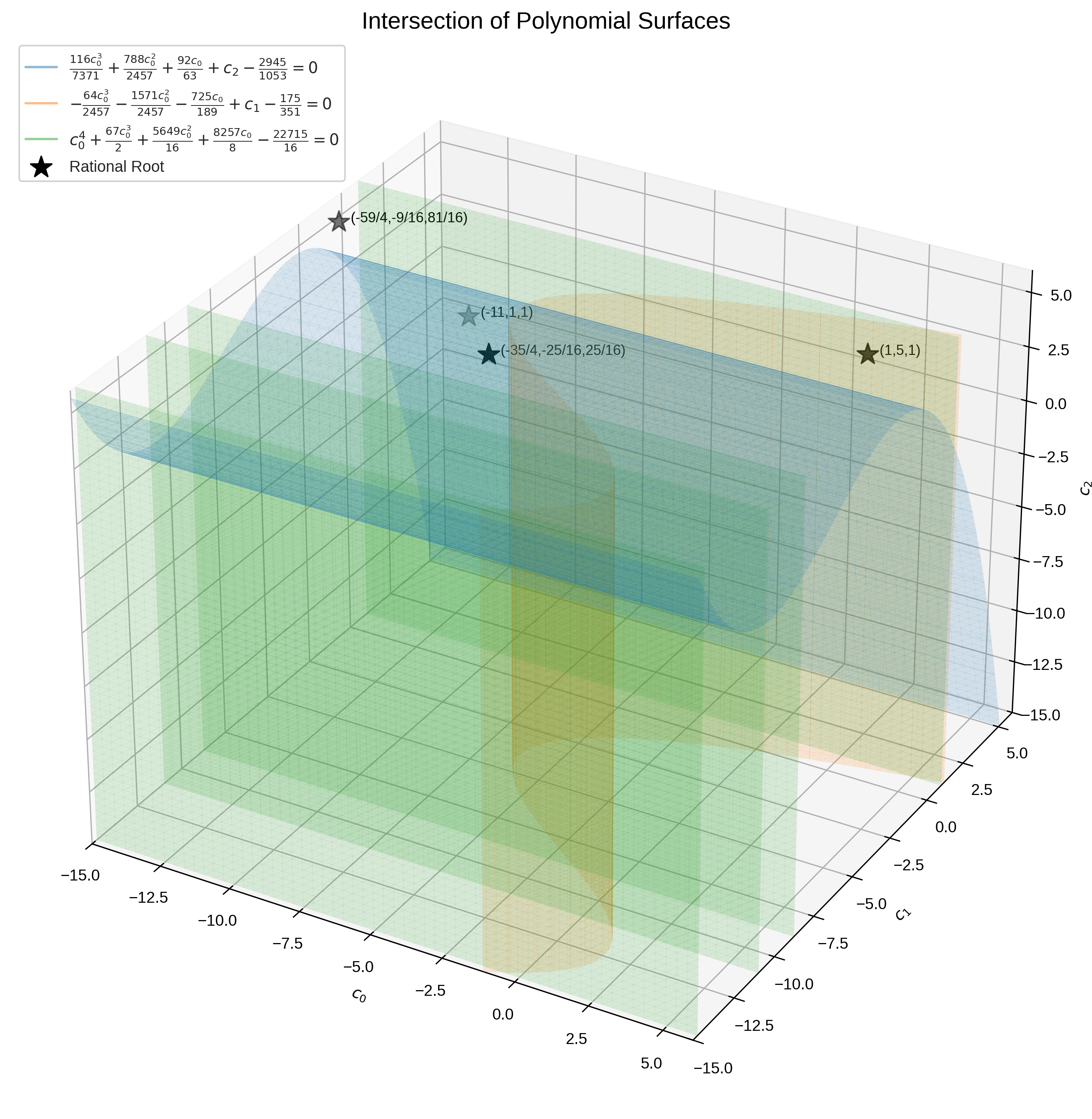}  
  \caption{Real roots of Gröbner basis for $a_5, a_6, a_7, a_8$ from Example~\ref{e:phaseless3d}. There are four common roots at $(-\frac{59}{4}, -\frac{9}{16}, \frac{81}{16})$, $(-11, 1, 1)$, $(-\frac{35}{4}, -\frac{25}{16}, \frac{25}{16})$ and $(1, 5, 1)$.}
  \label{fig:roots3dGB}
\end{figure}
\begin{figure}[htbp]
    \centering
    
    \begin{subfigure}[b]{0.45\textwidth}
        \centering
        \includegraphics[width=\textwidth]{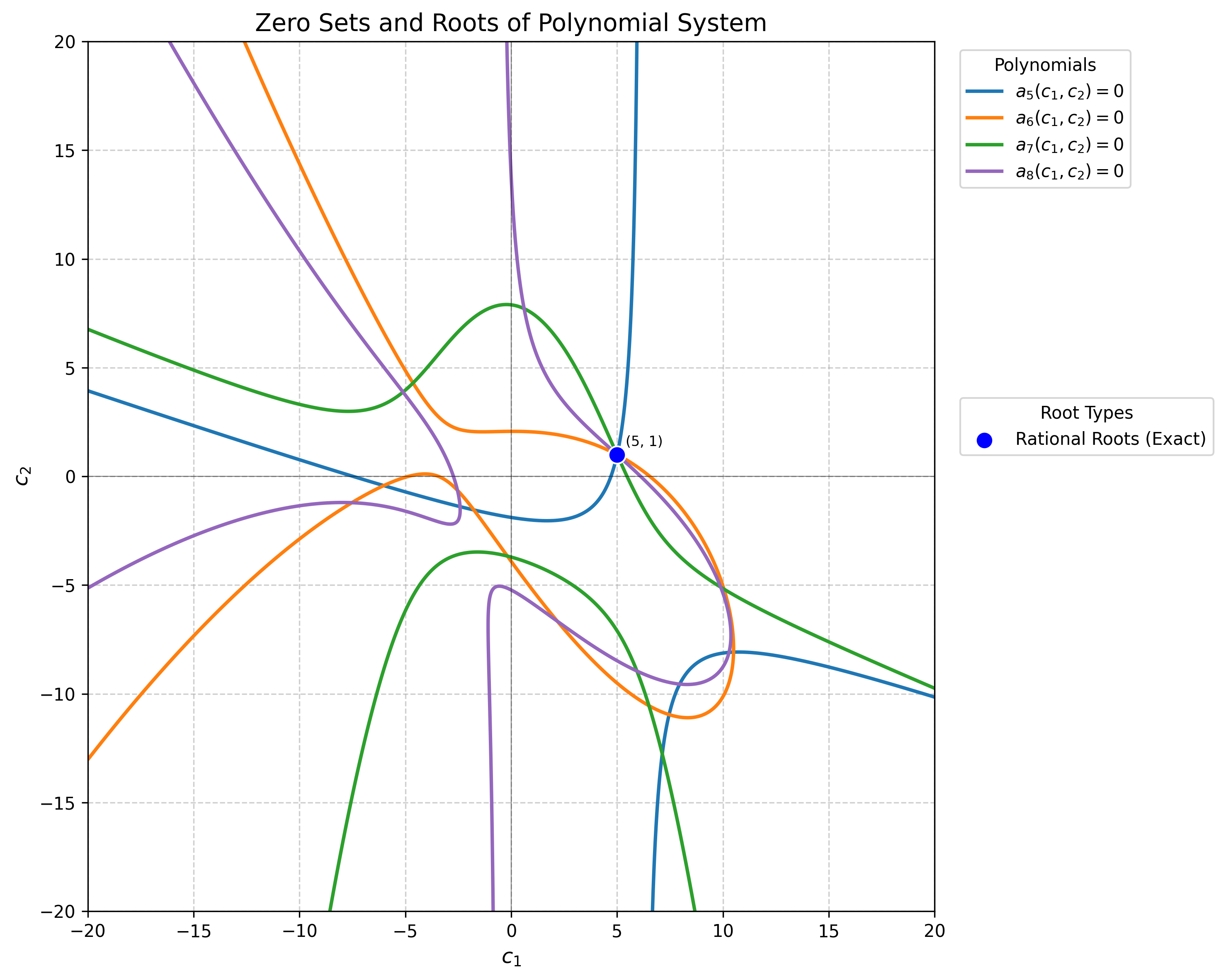}
        \caption{Polynomials with $c_0=1$}
        \label{fig:top-left}
    \end{subfigure}
    \hfill 
    \begin{subfigure}[b]{0.45\textwidth}
        \centering
        \includegraphics[width=\textwidth]{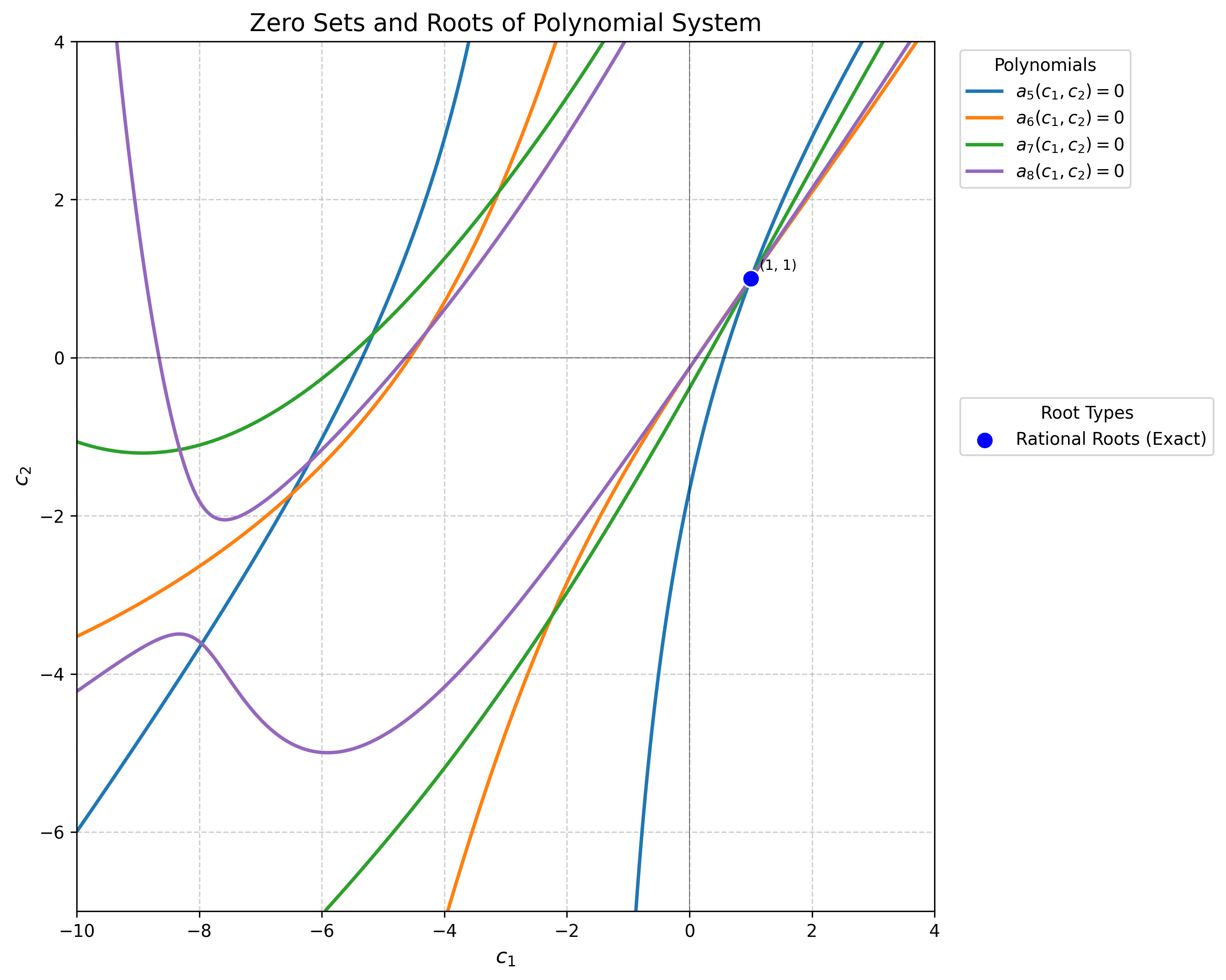}
        \caption{Polynomials with $c_0=-11$}
        \label{fig:top-right}
    \end{subfigure}
    
    \vspace{0.5cm} 
    
    \begin{subfigure}[b]{0.45\textwidth}
        \centering
        \includegraphics[width=\textwidth]{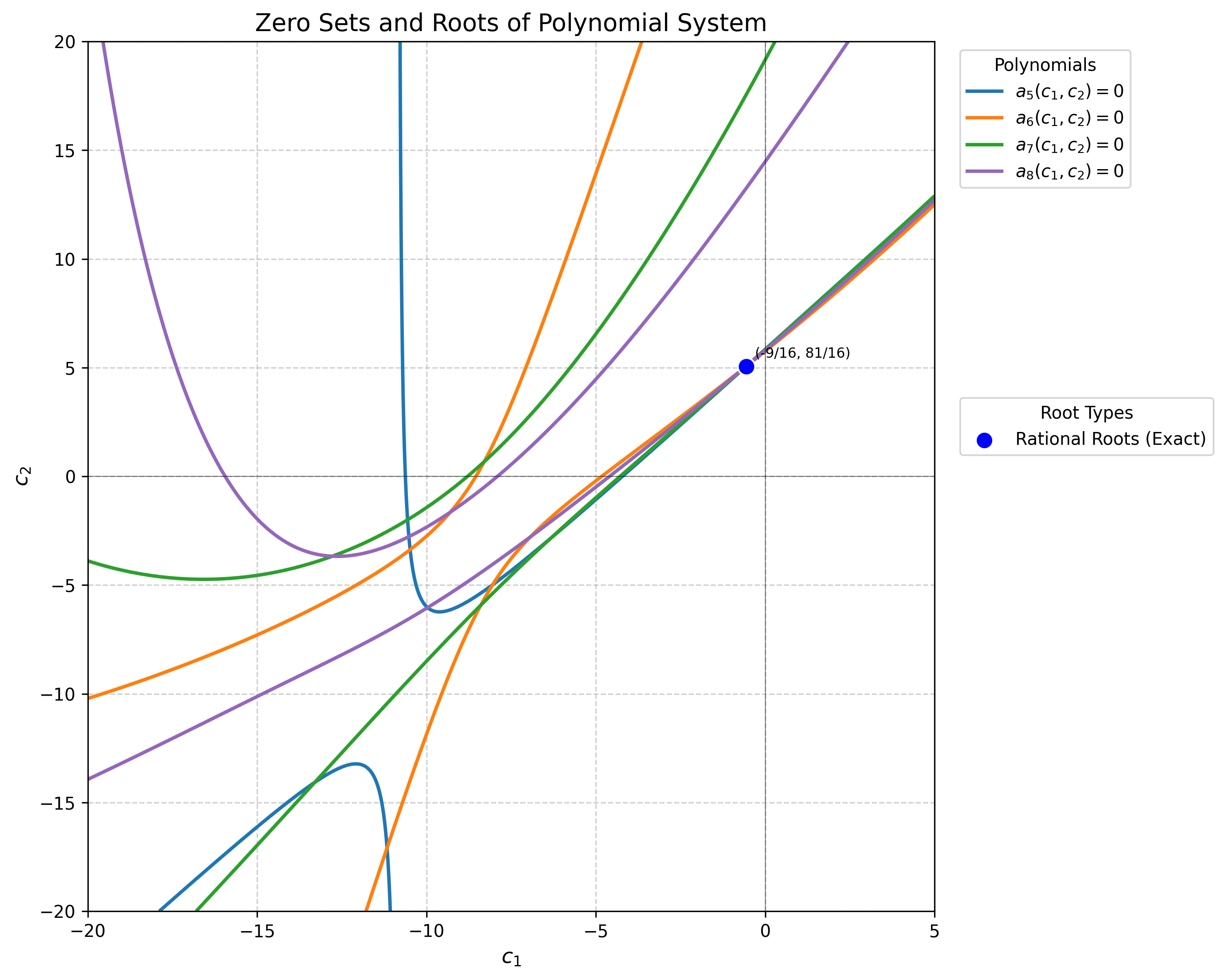}
        \caption{Polynomials with $c_0=-\frac{59}{4}$}
        \label{fig:bottom-left}
    \end{subfigure}
    \hfill
    \begin{subfigure}[b]{0.45\textwidth}
        \centering
        \includegraphics[width=\textwidth]{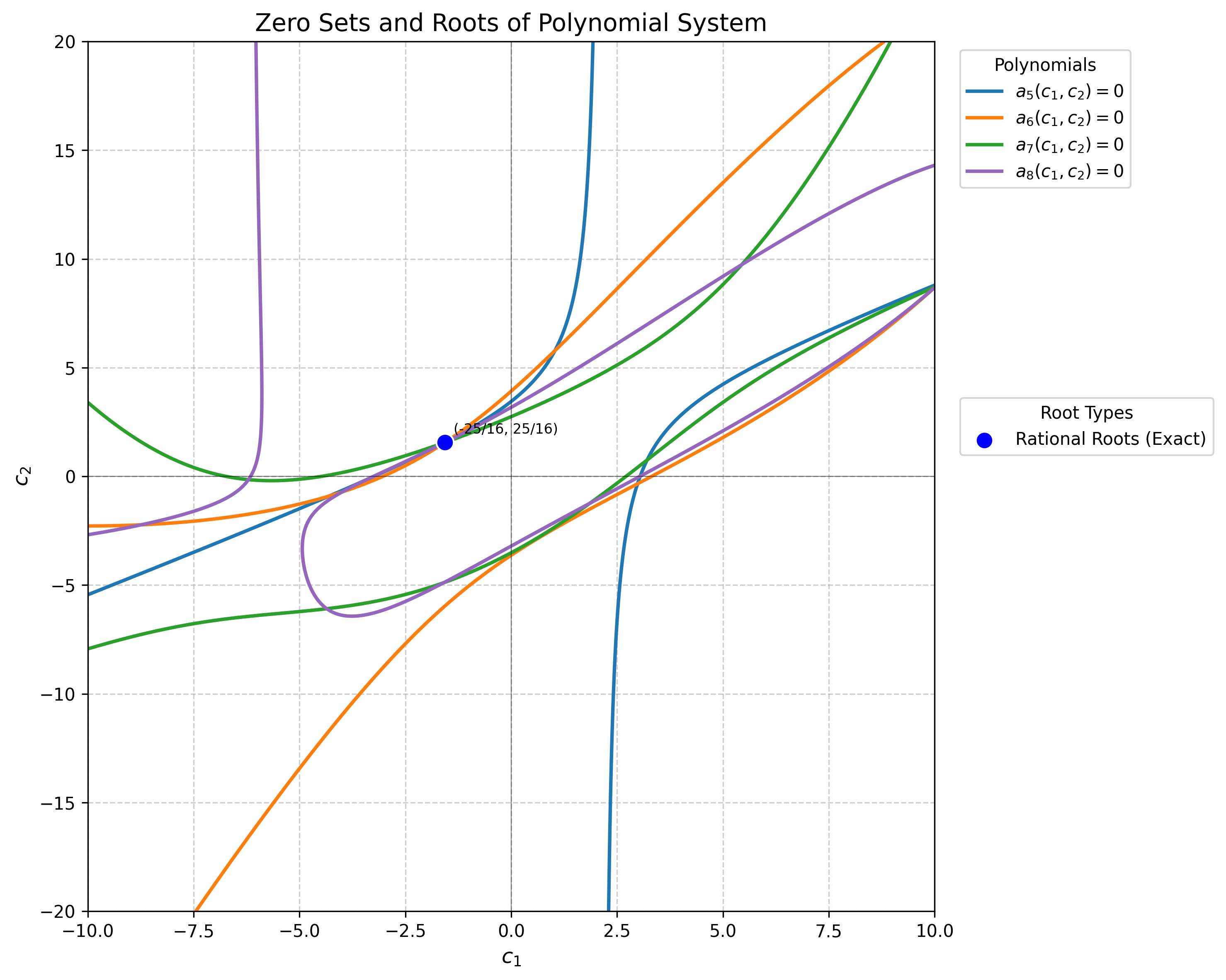}
        \caption{Polynomials with $c_0=-\frac{35}{4}$}
        \label{fig:bottom-right}
    \end{subfigure}
    
    \caption{2D slices of real roots of polynomials $a_5, a_6, a_7, a_8$ from Example~\ref{e:phaseless3d}.}
    \label{fig:2dslices}
\end{figure}
\end{proof}

\begin{remark}
 Although Lemma~\ref{l:triangular} is sufficient to prove that the phaseless interpolation problem form $2n-k$ points is in polynomial-time, it can be strengthened a bit to show that after substitution for the first variable, the Gröbner basis does not need to be recomputed. 
\end{remark}

An educational implementation of the algorithm is presented on Listing 1 in the Appendix.

\begin{theorem}[Phaseless Interpolation as an IBC problem] \label{t:adaptreal}
For a fixed parameter $k$, the Phaseless Interpolation Problem for polynomials of degree at most $n$ has a solution $(N,\phi)$ such that $N$ uses $n-k$ basic information operations with a single adaptation (i.e., $N$ selects a single point adaptively)  
whose computational complexity is bounded by $O(n^{\alpha_k})$ for some constant $\alpha_k$ that depends only on $k$ but does not depend on $n$.
\end{theorem}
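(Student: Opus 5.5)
Presumably the statement intends $2n-k$ non-adaptive absolute-value evaluations (the paper's running count) plus one adaptively chosen node; I will read it that way. The plan is to combine Theorem~\ref{t:poly:time} with a polynomial-time choice of a single separating node.

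\textbf{Non-adaptive stage.} Choose $2n-k+1$ pairwise distinct rational nodes non-adaptively, for instance $x_i = i$ for $0 \le i \le 2n-k$, and query $y_i = |q(x_i)|$. By Theorem~\ref{t:poly:time} we compute in time polynomial in $n$ and the input bit-size (exponent depending only on $k$) the full list $q_1,\dots,q_m$ of polynomials in $\mathbf{Q}_n[x]$ with $|q_j(x_i)| = y_i$, taken up to the global sign. By Lemma~\ref{l:poly:sol}, $m = O(n^k)$. Consequently the pairwise functions $q_j^2 - q_l^2$ for $j<l$ form only $O(n^{2k})$ nonzero polynomials, each of degree at most $2n$, with coefficients of polynomial bit-size.

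\textbf{Adaptive step.} We want one rational $x^*$ with $q_j(x^*)^2 \neq q_l(x^*)^2$ for all $j \neq l$. Each nonzero $q_j^2 - q_l^2$ has at most $2n$ roots, so at most $2n\binom{m}{2} = O(n^{2k+1})$ points are forbidden. We scan the integers $x^* = 0,1,2,\dots$ and stop at the first non-forbidden one. This happens within $O(n^{2k+1})$ candidates, so $x^*$ has $O(\log n)$ bits.

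Each candidate is checked by evaluating all $O(n^{2k})$ differences, which costs polynomial time. Alternatively, one can form $D = \prod_{j<l}(q_j^2 - q_l^2)$ and use the bound on its degree. Finally we query $y^* = |q(x^*)|$ and output the unique $q_j$ with $|q_j(x^*)| = y^*$. Correctness follows because the true $q$ is in the list and is the only entry matching $y^*$. The total query count is $2n-k+2$, with a single adaptation. The running time is the cost of Theorem~\ref{t:poly:time} plus $O(n^{2k+1}) \cdot O(n^{2k}) \cdot \mathrm{poly}(n)$ for the scan, hence $O(n^{\alpha_k})$.

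\textbf{Main obstacle.} The hard part is making the complexity bound genuinely polynomial, rather than the exponential $2^{n+1}$ candidate count from Lemma~\ref{l:zero:dim}. This rests on the Gröbner-basis step inside Theorem~\ref{t:poly:time} and on the $O(n^k)$ count of Lemma~\ref{l:poly:sol}. I will take both as given, and only need to verify that the bit-sizes of the listed $q_j$ stay polynomial so that the evaluations in the scan stay cheap. A secondary issue is that the complexity must depend on the bit-size of the observed values $y_i$. I will state the bound as polynomial in $n$ and that bit-size, interpreting $O(n^{\alpha_k})$ accordingly.
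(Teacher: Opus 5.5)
Your proof is correct. Its outer structure (a non-adaptive stage via Theorem~\ref{t:poly:time}, one separating query, then matching against the list) is what the paper's one-line proof intends. The difference is in how you find the adaptive node.

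The paper takes the node from Theorem~\ref{t:adaptation}, which does not use the candidate list at all. Using only $n+1$ of the observed values, it clears denominators in all $2^{n+1}$ sign-pattern interpolants $L_b$ at once. It then bounds every nonzero integer root of every difference $L_b-L_{b'}$ by a single quantity $B$, computed from the coefficients of the $y_j\ell_j$ (rational root theorem plus the triangle inequality), and queries at $B+1$. You instead run Theorem~\ref{t:poly:time} first, use $m=O(n^k)$ to see that the nonzero polynomials $q_j^2-q_l^2$ have at most $O(n^{2k+1})$ roots altogether, and scan small integers.

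What each route buys:
\begin{itemize}
\item The paper's route proves a stronger standalone fact. Its node depends on only $n+1$ values and separates all sign patterns, so $n+2$ adaptive points already suffice information-theoretically. It is computed without enumeration and without the solution-count bound. The price is that $B+1$ can have large, though polynomial, bit-size.
\item Your route separates only the actual candidates, which is all this theorem needs. It produces a node of $O(\log n)$ bits and is entirely elementary once the list is known. Your reliance on Lemma~\ref{l:poly:sol} costs nothing extra, since Theorem~\ref{t:poly:time} can only output the full list in polynomial time if that list is polynomially long.
\end{itemize}

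Your total of $2n-k+2$ queries is exactly what the paper's combination yields too. The mismatch with the evidently mistyped count in the statement disappears on replacing $k$ by $k+2$.
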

\begin{proof}
    By Theorem~\ref{t:poly:time} and Theorem~\ref{t:adaptation} below.
\end{proof}

\begin{remark}
    By a more careful analysis of the problem, one may show that in Theorem~\ref{t:adaptreal} one may set $\alpha_k = k$. Nonetheless, in our analysis we are primarily interested in the gap between polynomial-time and NP-complete computational classes.
\end{remark}

\begin{theorem}[Single adaptation in polynomial-time]\label{t:adaptation}
    Let $x_0, x_1, \dots, x_{n}$ be $n+1$ distinct evaluation points and let $y_0, y_1, \dots, y_{n}$ be the corresponding absolute values of the evaluations of a polynomial $p$ of degree at most $n$. There exists a point $x_{n+1}$ such that $y_{n+1} = |p(x_{n+1})|$ uniquely determines $|p|$. 
    
    Moreover, if the polynomial and evaluation points are rational, the point $x_{n+1}$ can be computed from 
    $\{\tuple{x_i, y_i}\}_{i=0}^n$ in time polynomial in the bit-size of the input.
\end{theorem}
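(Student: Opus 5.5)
Given $n+1$ distinct nodes with values $y_i$, the candidates are exactly the polynomials $q_b$ of degree at most $n$ interpolating $\tuple{x_i, b_i y_i}$ for sign vectors $b \in \{-1,1\}^{n+1}$ (as in the proof of Lemma~\ref{l:zero:dim}). Identifying $b$ with $-b$, there are at most $2^n$ candidate classes $|q_b|$. Write $\ell_j$ for the Lagrange basis on $x_0,\dots,x_n$, so $q_b(x)=\sum_j b_j y_j \ell_j(x)$.

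A new node $t$ separates two distinct classes $|q_b| \neq |q_{b'}|$ iff $q_b(t)^2 \neq q_{b'}(t)^2$. Equivalently, $t$ is not a root of $(q_b-q_{b'})(q_b+q_{b'})$. Now
\[
q_b \mp q_{b'} = \sum_j (b_j \mp b'_j) y_j \ell_j .
\]
So both factors are nonzero polynomials of degree at most $n$, and each is supported on a complementary set of indices. Specifically, $q_b-q_{b'} = 2\sum_{j\in D} b_j y_j\ell_j$ and $q_b+q_{b'} = 2\sum_{j\notin D} b_j y_j\ell_j$, where $D=\{j: b_j\neq b'_j\}$. The main obstacle is that there are exponentially many pairs, so we cannot simply avoid all the bad roots. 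Both factors are of the form $\sum_{j\in S}\epsilon_j y_j \ell_j$ for a subset $S$ and signs $\epsilon$. It suffices to pick $t$ that is not a root of any nonzero polynomial $r_{S,\epsilon}=\sum_{j\in S}\epsilon_j y_j\ell_j$. For existence this is trivial: there are finitely many such polynomials, each with at most $n$ roots. This gives the first claim.

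For the polynomial-time part, I would pick $t$ so that the values $y_j\ell_j(t)$ are "independent" and no signed subset sum vanishes. Concretely, $r_{S,\epsilon}(t)=0$ means
\[
\sum_{j\in S}\epsilon_j \frac{y_j w_j}{t-x_j} = 0, \qquad w_j=\prod_{i\neq j}(x_j-x_i)^{-1},
\]
after dividing by $\prod_i (t-x_i)$, taking $t\notin\{x_i\}$. I would choose $t$ by a dominance argument: make one term dominate all others in absolute value, in a nested way. Order the indices with $y_j\neq 0$ (indices with $y_j=0$ contribute nothing and are harmless, since $r_{S,\epsilon}$ then equals one built from a nonzero-$y$ subset). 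Let $\beta_j=|y_jw_j|>0$. Choose $t$ close to $x_{j_1}$ so that $\beta_{j_1}/|t-x_{j_1}|$ exceeds the sum of all other terms. This alone only handles sets $S$ containing $j_1$.

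Instead I would use a clean algebraic choice. Take $t$ such that the numbers $1/(t-x_j)$ have "superincreasing" magnitudes weighted by $\beta_j$. That is, order the nodes by distance from $t$ and require
\[
\frac{\beta_{j_m}}{|t-x_{j_m}|} > \sum_{l>m}\frac{\beta_{j_l}}{|t-x_{j_l}|}
\]
for every $m$. Then no nonempty signed subset sum vanishes, since the largest-index-first term in $S$ dominates the rest. Such $t$ can be found by taking $t$ far outside $[\min x_i,\max x_i]$ only if the $\beta$'s cooperate, so I would refine this. I would expect the cleanest route instead to be $t = M$ huge and to expand in powers of $1/M$. There, $r_{S,\epsilon}(M)=0$ forces the leading coefficient $\sum_{S}\epsilon_j y_j w_j$ and all lower coefficients to vanish, which is impossible for nonzero $r$. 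An explicit Cauchy-type root bound gives such an $M$: take $M$ larger than $1+\max$ ratio of coefficients over all $r_{S,\epsilon}$. This is bounded by $1+\sum_j |y_j|\cdot\|\ell_j\|_1$ divided by the minimal nonzero leading coefficient. Since the coefficients are rationals with common denominator $D_0$ (the product of denominators of $y_jw_j$), any nonzero coefficient has absolute value at least $1/D_0$, while all coefficients are at most $\sum_j|y_j|\|\ell_j\|_1$. Hence $t = 1 + D_0\sum_j |y_j|\,\|\ell_j\|_1$ exceeds every root of every nonzero $r_{S,\epsilon}$ by Cauchy's bound applied to the leading nonzero coefficient. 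This $t$ has polynomial bit-size and is computable in polynomial time, which I expect to close the argument; the delicate step is verifying the uniform lower bound on the leading nonzero coefficient via the common denominator.
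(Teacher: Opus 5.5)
Your reduction matches the paper's: up to a factor $2$, your polynomials $r_{S,\epsilon}=\sum_{j\in S}\epsilon_j y_j\ell_j$ are the paper's differences $L_b-L_{b'}$, and the existence part is fine. The gap is in the step you flagged as delicate. It is not true that all coefficients of the $r_{S,\epsilon}$ have common denominator $D_0=\prod_j\mathrm{den}(y_jw_j)$. Since $y_j\ell_j=y_jw_j\prod_{i\neq j}(x-x_i)$, the lower coefficients also carry elementary symmetric functions of the nodes, and $D_0$ does not see their denominators. $D_0$ controls only the $x^n$-coefficient $\sum_{j\in S}\epsilon_jy_jw_j$. Cauchy's bound needs the leading coefficient exactly when that one cancels. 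If $\deg r_{S,\epsilon}=d<n$, the leading coefficient is $\sum_{j\in S}\epsilon_jy_jw_jx_j^{\,n-d}$, which can be far below $1/D_0$ when the nodes are not integers.

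Your $t$ really can fail. Take $n=3$, nodes $0,\tfrac{101}{200},1,z$, and $p=A+B$ with $A=(x-z)\bigl(\tfrac{101}{200}-\tfrac{x}{100}\bigr)$ and $B=x\bigl(x-\tfrac{101}{200}\bigr)(x-1)$.
\begin{itemize}
\item Here $|y_jw_j|\in\{1,2\}$ for every $j$, so $D_0=1$.
\item For nonnegative nodes, $\|\prod_{i\neq j}(x-x_i)\|_1=\prod_{i\neq j}(1+x_i)$. So your formula gives $t=1+(1+z)\tfrac{1703}{200}+\tfrac{301}{100}$, which equals $\tfrac{101}{2}$ for $z=\tfrac{7595}{1703}$.
\item $A=r_{S,\epsilon}$ for $S=\{0,1,2\}$ and suitable signs. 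Its leading coefficient is $-\tfrac{1}{100}$ and it vanishes at $\tfrac{101}{2}$.
\item Hence $p$ and $p'=-A+B$ satisfy $|p|\neq|p'|$, agree in absolute value at the four nodes, and also agree in absolute value at your $t$.
\end{itemize}

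The repair is to let $D_0$ be a common denominator of all coefficients of all the $y_j\ell_j$. This is exactly the paper's $M$, and it still has polynomial bit size. Then every nonzero coefficient of every $r_{S,\epsilon}$ has absolute value at least $1/D_0$. Cauchy's bound then gives $|\xi|<1+D_0\sum_j|y_j|\,\|\ell_j\|_1=t$ for every root $\xi$.

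With that change your argument is a minor variant of the paper's. The paper clears denominators by $M$ and applies the rational root theorem to the lowest nonzero coefficient. That only excludes integer roots above $B$, which is why it chooses the integer $B+1$. You instead apply Cauchy's bound to the leading coefficient, which excludes all complex roots of modulus at least $t$.
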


\begin{proof}
    The existence of such a point $x_{n+1}$ was established in \cite{przybylek2020note}, but the constructive method provided therein required exponential time. We present a construction that operates in polynomial time.

    Let $S$ be the set of all polynomials interpolating points $\{\tuple{x_i, b_i y_i}\}_{i=0}^n$ for every possible sign vector $b \in \{-1, 1\}^{n+1}$. The absolute value $|p|$ is uniquely determined if and only if no two distinct polynomials in $S$ intersect at $x_{n+1}$.
    
    Consider any two distinct sign vectors $b, b' \in \{-1, 1\}^{n+1}$. The intersection of the corresponding polynomials $L_b$ and $L_{b'}$ occurs as a root of their difference:
    $$
        D_{b,b'}(x) = L_b(x) - L_{b'}(x) = \sum_{j=0}^n (b_j - b'_j) y_j \ell_j(x),
    $$
    where $\ell_j(x)$ are the Lagrange basis polynomials. We seek an integer $x_{n+1}$ that is not a root of $D_{b,b'}(x)$ for any pair $b, b'$.

    Since the inputs $x_i, y_i$ are rational, we can clean denominators to work with integers. Let $M$ be a common multiple of the denominators of all coefficients in the expansions of $y_j \ell_j(x)$. Define the integer polynomial $P_{b,b'}(x) = M \cdot D_{b,b'}(x)$. 
    
    We derive an easily computable bound $B$ on the magnitude of integer roots of $P_{b,b'}(x)$ that is independent of the specific signs $b, b'$. Let $P_{b,b'}(x) = \sum_{k=0}^n A_k(b,b') x^k$. By the Rational Root Theorem, any non-zero integer root of $P_{b,b'}(x)$ must divide the lowest-degree non-zero coefficient $A_s(b,b')$. Therefore, any such root is bounded by $|A_s(b,b')| \le \max_k |A_k(b,b')|$. We can bound the coefficients uniformly using the triangle inequality. Since $|b_j - b'_j| \le 2$, we have:
    $$
        |A_k(b,b')| \le \sum_{j=0}^n |b_j - b'_j| \cdot |\text{coeff of } x^k \text{ in } M y_j \ell_j(x)| \le 2 \sum_{j=0}^n |c_{j,k}|
    $$
    where $c_{j,k}$ is the coefficient of $x^k$ in the polynomial $M y_j \ell_j(x)$. Let $B = \max_k \left( 2 \sum_{j=0}^n |c_{j,k}| \right)$. Such a bound $B$ depends only on the inputs $x_i, y_i$.
    
    Choosing an integer $x_{n+1} = B + 1$ ensures that $x_{n+1}$ is strictly larger than every possible integer root of any difference polynomial. Thus, no two polynomials in $S$ intersect at $x_{n+1}$. Since $B$ is computed via basic arithmetic operations on the rational inputs, its bit-size is polynomial in the input size.
\end{proof}

\section{The case of $(1+c)n$ points for $0 \leq c < 1$}

In this section we show that the phaseless polynomial retrieval from $(1+c)n$ evaluation points is hard for any fixed $0 \leq c < 1$. In fact, it remains hard in case $k = \lfloor cn \rfloor$ evaluations are exact (i.e.~with phase). However, one must be very careful when formulating the claim. The theorem stated below says that the retrieval problem is NP-hard. Nonetheless, there are two ways to state the theorem--a weak one: ``for every $n, k$ we can choose evaluation nodes, such that the problem is hard'', and a strong one: ``for every $n, k$ no matter how we choose evaluation nodes, the problem is hard''. In the sequel we prove the stronger version. Note that in the strong version we have to be prudent with specifying what is the input to our problem. We clearly cannot take the evaluation nodes as inputs, because we want to restrict to one particular choice for every $n$. But then, if the bit-size of the $n$th evaluation node is super-polynomial in $n$, then we cannot perform any arithmetic in polynomial time. One way to make sense of the above is as follows: we prove the theorem for any infinite sequence of evaluation nodes $x_0, x_1, x_2, \dots$ such that the bit-size of $x_i$ is polynomial in $i$. 

\begin{theorem}[Phaseless polynomial retrieval over $\tuple{\overline{r}, \overline{v}}$ is NPC]
Let $r_0, r_1, r_2, \dots$ and $v_0, v_1, v_2, \dots$ be two infinite sequences of nodes, where $r_i$ are pairwise distinct, and the bit-size of $r_i$ and $v_i$ is polynomial in $i$.  
The problem of identifying a polynomial $p \in \mathbf{Q}_{n}\left((r_i)_{i=0}^{k-1}, (v_i)_{i=0}^{k-1}\right)[x]$ up to its phase from non-adaptive $n-k+2$ phaseless evaluations at points $x_0 = r_k, x_1 = r_{k+1}, \dots, x_{n-k+1} = r_{n+1}$ is NP-complete for $k = \lfloor cn \rfloor$, where $0 \leq c < 1$.
\end{theorem}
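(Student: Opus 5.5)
We would prove the theorem by casting it as a decision problem, showing membership in NP by guessing signs, and showing NP-hardness by a reduction from Partition; the divided-difference identity below is the algebraic engine. The decision problem is: given $n$ (in unary) and nonnegative rationals $y_0,\dots,y_{n-k+1}$, decide whether some $p\in\mathbf{Q}_n[x]$ satisfies $p(r_i)=v_i$ for $i<k$ and $|p(r_{k+j})|=y_j$ for $0\le j\le n-k+1$. The search version (identifying $p$ up to phase) is at least as hard, since any identifying procedure decides existence.

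\textbf{Membership in NP.} The certificate is a sign vector $b\in\{-1,1\}^{n-k+2}$. Together with the exact values, it gives $n+2$ prescribed values at the pairwise distinct nodes $r_0,\dots,r_{n+1}$. We Lagrange-interpolate the first $n+1$ of them, as in Lemma~\ref{l:affine}, and check the last one. Every quantity involved has bit-size polynomial in $n$, because the bit-size of $r_i$ and $v_i$ is polynomial in $i$.

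\textbf{The key identity.} Values $z_0,\dots,z_{n+1}$ at the nodes $r_0,\dots,r_{n+1}$ are attained by a polynomial of degree at most $n$ if and only if the $(n+1)$-st divided difference vanishes:
$$\sum_{j=0}^{n+1} w_j z_j = 0, \qquad w_j=\prod_{i\neq j}\frac{1}{r_j-r_i}\neq 0 .$$
Let $T=-\sum_{j<k} w_j v_j$; this is a fixed rational of polynomial bit-size. Writing $u_j=w_{k+j}$, the instance is therefore solvable if and only if there are signs $b_j$ with
$$\sum_{j=0}^{n-k+1} b_j\,|u_j|\,y_j = T ,$$
where the signs of the $u_j$ have been absorbed into the $b_j$. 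So the problem is exactly a signed subset-sum with a fixed target $T$.

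\textbf{Reduction from Partition.} Let $a_1,\dots,a_m$ be a Partition instance, asking for $\epsilon\in\{\pm1\}^m$ with $\sum_i\epsilon_i a_i=0$.
\begin{enumerate}
\item Choose the smallest $n$ with $n-\lfloor cn\rfloor+2\ge m+1$. Since $c<1$, this gives $n=O(m/(1-c))$, which is polynomial in $m$, and we set $k=\lfloor cn\rfloor$.
\item Compute $T$ and choose an integer $L>2|T|$ (any $L\ge 1$ if $T=0$).
\item For $1\le i\le m$ set $y_{i}=L a_i/|u_{i}|$.
\item Use one extra phaseless node: set $y_0=|T|/|u_0|$.
\item Set all remaining $y_j=0$; their signs are irrelevant.
\end{enumerate}
The condition then reads
$$L\sum_{i=1}^m b_i a_i = T - b_0|T| \in \{0,\pm 2|T|\}.$$
The left side is a multiple of $L$, and $0<2|T|<L$ whenever $T\neq 0$. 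Hence the only possibility is $b_0|T|=T$ together with $\sum_i b_i a_i=0$. Conversely, any partition yields valid signs. So the instance is solvable if and only if the Partition instance is.

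\textbf{Main obstacle.} The delicate point is that the exact data $v_i$ produce an uncontrollable nonzero offset $T$. The extra node $y_0$ and the scaling factor $L$ are introduced precisely to neutralize it. Beyond that, we must verify that all constructed numbers ($w_j$, $T$, $L$, $y_j$) have bit-size polynomial in the size of the Partition instance. This follows because $n$ is polynomial in $m$ and the node bit-sizes are polynomial in their index.
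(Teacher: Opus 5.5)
Your proposal is correct and is essentially the paper's argument. The paper reaches the same single linear condition: its $\alpha_i$ and $S=\sum_i\beta_i$, obtained from the parametrisation of Lemma~\ref{l:affine} and the barycentric vector spanning $\mathit{Im}(V)^\bot$, coincide with your $u_j=w_{k+j}$ and $T$. It likewise absorbs the constant with one auxiliary node and an integrality argument, via $t_{n-k+1}=\frac{1}{3}$.

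Your additive gadget, with $y_0=|T|/|u_0|$ and a scale $L>2|T|$, is slightly more robust than the paper's. It also covers $T=0$, for example whenever $k=0$ and in particular when $c=0$; there the paper's normalisation by $S$ divides by zero and its choice $y_i=3|t_iS/\alpha_i|$ collapses to all zeros. You also supply the NP-membership half, which the paper leaves implicit.
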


\begin{proof}
Recall from Lemma~\ref{l:affine} the parametrisation of the affine space of interpolating polynomials:
$$\cbold \in \Q^{n-k+2} \mapsto  p(x, \cbold) = L(x) + \left( \sum_{j=0}^{n-k} c_j x^j \right) \prod_{j=0}^{k-1} (x-r_j)$$
Let distinct points $x_0, x_1, \dots, x_{n-k+1} \in \mathbf{Q}$ be given as above, together with the absolute values $y_0, y_1, \dots, y_{n-k+1} \in \mathbb{Q}$. Let $p \in p(x, \cbold)$, then for some $b_i \in\{-1, 1\}$ we have that:
$$b_i y_i = b_i|p(x_i)| = p(x_i) =  L(x_i) + \left( \sum_{j=0}^{n-k} c_j x_i^j \right) \prod_{j=0}^{k-1} (x_i-r_j)$$
Therefore:
$$\sum_{j=0}^{n-k} c_j x_i^j = \frac{b_iy_i - L(x_i)}{ \prod_{j=0}^{k-1}(x_i - r_j)}$$
If we assume that $b_i$ are fixed, then the above is the interpolation problem for a polynomial of degree at most $n-k$ from $n-k+2$ interpolation points, i.e.: $c_j$ are the solutions to the over-constrained system of equations $V [c_0, c_1, \dotsc, c_{n-k}] = v$, where:
$$
V = \begin{pmatrix}
1 & x_0 & x_0^2 & \cdots & x_0^{n-k} \\
1 & x_1 & x_1^2 & \cdots & x_1^{n-k} \\
1 & x_2 & x_2^2 & \cdots & x_2^{n-k} \\
\vdots & \vdots & \vdots & \ddots & \vdots \\
1 & x_{n-k+1} & x_{n-k+1}^2 & \cdots & x_{n-k+1}^{n-k}
\end{pmatrix}$$
$$v =
\begin{pmatrix}
 \frac{b_0y_0 - L(x_0)}{ \prod_{j=0}^{k-1}(x_0 - r_j)} \\
 \frac{b_1y_1 - L(x_1)}{ \prod_{j=0}^{k-1}(x_1 - r_j)} \\
 \frac{b_2y_2 - L(x_2)}{ \prod_{j=0}^{k-1}(x_2 - r_j)} \\
\vdots \\
 \frac{b_{n-k+1}y_{n-k+1} - L(x_{n-k+1})}{ \prod_{j=0}^{k-1}(x_{n-k+1} - r_j)} \\
\end{pmatrix}
$$
System $V \cbold = v$ has solutions if and only if $v$ is in the image $\mathit{Im}(V)$ of $V$. If we denote by $\mathit{Im}(V)^\bot$ the vector space orthogonal to $\mathit{Im}(V)$, then $v \in \mathit{Im}(V)$ if and only if $v \bot \mathit{Im}(V)^\bot$. Moreover, since $V$ is of full column rank, the space $\mathit{Im}(V)^\bot$ is 1-dimensional and can be represented by a single basis vector $w \in \mathit{Im}(V)^\bot$. Then, $v \in \mathit{Im}(V)$ if and only if $\langle v, w \rangle = 0$.
Unfolding the expression, we get:
$$\sum_{i=0}^{n-k+1} \frac{b_iy_i - L(x_i)}{ \prod_{j=0}^{k-1}(x_i - r_j)} w_i = 0$$
or 
$$\sum_{i=0}^{n-k+1} b_iy_i\alpha_i - \beta_i = 0$$
where: $\alpha_i = \frac{w_i}{\prod_{j=0}^{k-1}(x_i - r_j)}$ and $\beta_i = \frac{L(x_i)w_i}{\prod_{j=0}^{k-1}(x_i - r_j)}$ are non-zero and do not depend on values $y_i$. Furthermore, if $S = \sum_{i=0}^{n-k+1} \beta_i$, then the above equation can be rewritten as:
$$\sum_{i=0}^{n-k+1} b_i y_i \frac{\alpha_i}{S} = 1$$
\begin{remark}[On non-zero coordinates]
    We have to ensure that all $w_i \neq 0$. But this is, indeed, the case and we can even find an explicit formula for $w_i$. The condition $w \bot V$ is equivalent to
$$\sum_{i=0}^{n-k+1} w_i x_i^j = 0, \quad \forall j = 0, 1, \dots, n-k$$
This means that $\sum_{i=0}^{n-k+1} w_i p(x_i) = 0$ for all polynomials $p$ of degree at most $n-k$. Define $w$ in the following way:
$$w_i = \prod_{\substack{j=0 \\ j \neq i}}^{n-k+1} \frac{1}{x_i - x_j}, \quad i = 0, 1, \dots, n-k+1$$
These are the barycentric weights associated with Lagrange interpolation at the points $(x_i)$. 
That is, the Lagrange basis for  $(x_i)$ consists of polynomials:
$$l_i = \prod_{\substack{j=0 \\ j \neq i}}^{n-k+1} \frac{x-x_j}{x_i - x_j}, \quad i = 0, 1, \dots, n-k+1$$
with:
$$p(x) = \sum_{i=0}^{n-k+1} p(x_i)l_i(x)$$
Since the degree of $p(x)$ is $n-k$, it must be the case that in the above expression the coefficient at $x^{n-k+1}$ equals zero, which translates to:
$$\sum_{i=0}^{n-k+1} \prod_{\substack{j=0 \\ j \neq i}}^{n-k+1} \frac{p(x_i)}{x_i - x_j} = 0$$
and so $\langle w, v \rangle = 0$, which completes the proof.
\end{remark}
Now, we are ready to show the reduction from the Partition Problem. Let 
$$\sum_{i=0}^{n-k} b_i t_i  = 0$$
be an instance of the Partition Problem with integers $t_i$. Let us set $t_{n-k+1} = \frac{1}{3}$. 
For fixed $\tuple{\overline{r}, \overline{v}}$ and $x_i$, define $y_i = 3|t_i\frac{S}{\alpha_i}|$. Then the solution to the interpolation problem has the following form:
$$3\sum_{i=0}^{n-k} b_i t_i + b_{n-k+1}  = 1$$
We claim that solutions to the above problem are tantamount to solutions of the Partition Problem. Indeed, if $b_i$ are the solution to the above interpolation problem, then it must be that $b_{n-k+1} = 1$ and then:
$$3\sum_{i=0}^{n-k} b_i t_i + 1 = 1 \Leftrightarrow \sum_{i=0}^{n-k} b_i t_i = 0$$
Otherwise, i.e.~if $b_{n-k+1} = -1$, we have:
$$3\sum_{i=0}^{n-k} b_i t_i - 1 = 1 \Leftrightarrow \sum_{i=0}^{n-k} b_i t_i = \frac{2}{3}$$
which is impossible, because $t_i$, for $0 \leq i \leq n-k$, are integers.
Since $k = cn$ with $0 \leq c < 1$, we have that $n-k = n-cn=(1-c)n = \Theta(n)$. Therefore, our transformation is polynomial in bit-size.

\end{proof}

\begin{theorem}[Phaseless retrieval from $(1+c) n + 2$ points with $0 \geq c < 1$ is NPC]
Let $r_0, r_1, r_2, \dots$ be an infinite sequences of pairwise-distinct rational numbers (the evaluation nodes), such that the bit-size of $r_i$ is polynomial in $i$.  The problem of identifying a polynomial of degree $n$ up to its phase from non-adaptive $n+2$ phaseless evaluations and $cn$ exact evaluations in nodes $r_0, r_1, \dots, r_{n+\lfloor cn \rfloor}$ is NP-complete for any fixed $0 \leq c < 1$.
\end{theorem}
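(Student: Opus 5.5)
This theorem should follow almost directly from the preceding one (phaseless retrieval over $\tuple{\overline{r}, \overline{v}}$ is NPC), applied with $k = \lfloor cn \rfloor$. I would argue membership in NP and NP-hardness separately.

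\emph{Membership in NP.} The certificate is a sign vector $b \in \{-1,1\}^{n+2}$ for the phaseless evaluations. Given $b$, the exact evaluations at the $k=\lfloor cn\rfloor$ nodes together with the $n+2$ signed values form $n+k+2$ interpolation conditions on a polynomial of degree at most $n$. These can be checked by Lagrange interpolation on $n+1$ of them, followed by verification of the remaining ones. Since the bit-size of each $r_i$ is polynomial in $i$, the arithmetic on the nodes $r_0,\dots,r_{n+\lfloor cn\rfloor}$ is polynomial in $n$ and in the bit-size of the values. So the verifier runs in polynomial time.

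\emph{NP-hardness.} I would reduce from the previous theorem, or directly from the Partition Problem by repeating its construction. The nodes are relabelled so that the first $k$ of them, $r_0,\dots,r_{k-1}$, carry the exact evaluations and the subsequent ones carry the phaseless evaluations. The key point is that in the previous theorem the exact values $v_i$ were an arbitrary fixed sequence, and the constructed phaseless values $y_i = 3|t_i S/\alpha_i|$ encode the partition instance. Here the exact values are part of the input, so we may simply choose $v_i = 0$ (or any fixed convenient sequence of polynomial bit-size). Then $L \equiv 0$, every $\beta_i$ vanishes, and $S=0$. This degenerate case must be handled: either take $v_i$ so that $S \neq 0$, or rewrite the final normalisation as $\sum_i b_i y_i \alpha_i = \sum_i \beta_i$ and use the auxiliary coordinate $t_{n-k+1}$ to supply the nonzero right-hand side directly. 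I would follow the second route. Set $y_{n-k+1}$ so that $y_{n-k+1}\alpha_{n-k+1}$ equals a fixed nonzero constant, with target $1$ when $L\equiv 0$. Then the parity argument with the factor $3$ goes through unchanged: only $b_{n-k+1}=+1$ is consistent, and this happens exactly when $\sum b_i t_i = 0$.

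\emph{Counting the nodes.} The previous theorem uses $n-k+2$ phaseless evaluations, whereas here we need $n+2$ phaseless and $cn$ exact, for a total of $(1+c)n+2$. I would reconcile the two by embedding: the degree bound is $n$, the $k$ exact nodes cut the affine space to dimension $n-k+1$, and the extra phaseless evaluations beyond $n-k+2$ are padded. For each padding node, set the partition weight $t_i = 0$, i.e.\ $y_i = |p(x_i)|$ with the value contributing nothing to the sign constraint. Alternatively, pad the Partition instance with zero integers, which preserves yes/no answers. The sizes remain $\Theta(n)$, so the transformation is polynomial.

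The main obstacle is exactly this bookkeeping. Zero weights give $y_i = 0$, so their signs are irrelevant, but I must check that the orthogonality vector $w$ (now with $n+2$ entries and a higher-dimensional $\mathit{Im}(V)^\bot$) still yields a single linear sign condition. With more phaseless points than $\dim+1$, $\mathit{Im}(V)^\bot$ has dimension greater than $1$, giving several conditions. I would handle this by choosing the degree of the reduction instance so that exactly $\dim+1$ phaseless values are nonzero. The rest are then determined: their values are forced to be consistent by computing $y_i$ from the intended solution polynomial, which is nonzero generically but harmless since the signs there are fixed by interpolation.
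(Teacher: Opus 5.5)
Your NP-membership argument is fine, and your opening idea is exactly the paper's proof: the exact evaluations only fix $\overline{v}$, after which the instance \emph{is} an instance of phaseless retrieval over $\tuple{\overline{r},\overline{v}}$, so the reduction is the identity. You are also right that the literal counts in the statement do not match the previous theorem: it has $n+2$ phaseless plus $\lfloor cn\rfloor$ exact evaluations at degree $n$, whereas the previous theorem has $n-k+2$ phaseless evaluations. But your padding does not repair this.

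A padding value $y_i=0$ does not ``contribute nothing''. It imposes the exact constraint $p(x_i)=0$, which removes one dimension of the affine space together with one phaseless condition. Take $k$ exact values, $z$ zero values and $n+2-z$ nonzero phaseless values. Then $\mathit{Im}(V)^\bot$ has dimension $(n+2-z)-(n+1-k-z)=k+1$, whatever $z$ is and however you choose $n$. So you can never reach a single linear sign condition once $k\ge 1$.

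Your alternative, setting $y_i=|p(x_i)|$ for ``the intended solution polynomial'', is circular. A Karp reduction from Partition must write down the instance without knowing whether a partition exists, let alone which one. Moreover, $p_b(x_i)$ is an affine function of the unknown sign vector $b$ and generally has different absolute values for different partitions. Hence, under the literal count, your argument gives no hardness for the resulting $k+1$ simultaneous sign conditions.

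The paper's identity reduction is sound only if the counts are read so that there is exactly one more phaseless node than the dimension left by the exact nodes. One such reading is degree $n+\lfloor cn\rfloor$ with $\lfloor cn\rfloor$ exact and $n+2$ phaseless nodes. That is the previous theorem with constant $c/(1+c)$, since $\lfloor \frac{c}{1+c}(n+\lfloor cn\rfloor)\rfloor=\lfloor cn\rfloor$.

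Separately, replacing the exact values by $v_i=0$ is unnecessary, because the previous theorem already holds for any fixed $\overline{v}$. Your handling of that case is also wrong. With $L\equiv 0$ the condition is homogeneous, $\sum_i b_i y_i\alpha_i=0$. An auxiliary term $b_{n-k+1}C$ carries an unknown sign, so it cannot serve as a right-hand side. For $C\neq 0$ it gives $\sum_{i\le n-k} b_i t_i=\mp C/3\neq 0$, which is not Partition. Nor is $b_{n-k+1}=+1$ forced, since $b\mapsto -b$ is a symmetry when $L\equiv 0$. If you want $L\equiv 0$, drop the auxiliary coordinate and set $y_i=|t_i/\alpha_i|$ at all $n-k+2$ phaseless nodes. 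After absorbing the signs of $\alpha_i$ into $b_i$, this encodes $\sum_i b_i t_i=0$ directly.
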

\begin{proof}
Let us evaluate our polynomial at nodes $r_0, r_1, \dots, r_{\lfloor cn \rfloor}$ exactly. This yields non-negative rational points $v_0, v_1, , \dots, v_{\lfloor cn \rfloor}$. The reduction from phaseless polynomial retrieval over $\tuple{\overline{r}, \overline{v}}$ is trivial, because solutions $p \in \mathbf{Q}_n[x]$ to the above problem with fixed nodes $\overline{r}$ of exact evaluations are tantamount to solutions for phaseless polynomial retrieval for fixed $\tuple{\overline{r}, \overline{v}}$, where $\overline{v}$ are corresponding evaluation values.
\end{proof}

\begin{remark}
    The above can be generalised for any $k$ such that $n-k = O(n^p)$ for any $p>0$. Therefore, setting $k = cn$ for any $c < 1$ gives the result mentioned in the introduction, because we have: $n-cn = (1-c)n = \Theta(n)$. Nonetheless, we can also get a stronger result. Let $k = n - n^p$ for some $0 < p < 1$, then: $n-k = n - n + n^p = n^p = \Theta(n^p)$. Therefore, the problem of identifying a polynomial of degree $n$ from $2n-n^p$ evaluations is NP-complete for any fixed $0 < p < 1$.
\end{remark}

\section*{Appendix}

\begin{lstlisting}[language=Python, caption={Phaseless Interpolation over $\Q$ with $2n+1-k$ points}]
import sympy as sp
from sympy.abc import x


# ==========================================
# Part 1: Triangular System Solver
# ==========================================

def _simplify_polys(polys, partial_solution):
    """Substitutes partial solutions and expands polynomials."""
    current_polys = [sp.expand(p.subs(partial_solution)) for p in polys]
    return [p for p in current_polys if p != 0]


def _check_inconsistency(active_polys):
    """Returns True if any polynomial simplifies to a non-zero number."""
    return any(p.is_Number and p != 0 for p in active_polys)


def _find_target_poly(active_polys, target_var):
    """Identifies a univariate polynomial for the target variable."""
    for p in active_polys:
        syms = p.free_symbols
        if not syms: continue
        if syms == {target_var} or syms.issubset({target_var}):
            return p
    return None


def _find_rational_roots(poly, target_var):
    """Solves a univariate polynomial and returns strictly rational roots."""
    roots = sp.solve(poly, target_var, dict=True)
    valid_roots = []
    for root in roots:
        val = root[target_var]
        if val.is_number and val.is_rational:
            valid_roots.append(val)
    return valid_roots


def solve_triangular_system(polys, variables, partial_solution=None):
    """Recursively solves a triangular system of polynomials."""
    if partial_solution is None: partial_solution = {}
    if not variables: return [partial_solution]

    target_var, remaining = variables[-1], variables[:-1]
    active_polys = _simplify_polys(polys, partial_solution)

    if _check_inconsistency(active_polys): return []

    uni_poly = _find_target_poly(active_polys, target_var)
    if uni_poly is None: return []

    full_solutions = []
    for val in _find_rational_roots(uni_poly, target_var):
        new_partial = partial_solution.copy()
        new_partial[target_var] = val
        full_solutions.extend(solve_triangular_system(polys, remaining, new_partial))

    return full_solutions


# ==========================================
# Part 2: Core Logic (_solve_core)
# ==========================================

def _setup_shifted_points(points, k, shift_val):
    """Applies coordinate shift and calculates degrees."""
    shifted = [(p[0] + shift_val, p[1]) for p in points]
    m = len(shifted)
    d = (m + k - 1) // 2
    return shifted, m, d


def _compute_p_coeffs(x_vals, y_vals, k, m, c_vars):
    """Computes coefficients of P(x; c) via interpolation."""
    L_expr = sp.interpolating_poly(m, x, x_vals, y_vals)
    R_expr = sp.prod([(x - xi) for xi in x_vals])
    S_expr = sum(c_vars[j] * x ** j for j in range(k))

    p_poly = sp.Poly(sp.expand(L_expr + S_expr * R_expr), x)
    max_deg = m + k - 1
    return {i: p_poly.coeff_monomial(x ** i) for i in range(max_deg + 1)}


def _compute_convolution_numerator(range_indices, a_terms, a0):
    """Computes the sum of a_j * a_{r-j} with manual exponent handling."""
    sum_num, sum_max_exp = 0, 0
    term_exps = [a_terms[j][1] + a_terms[range_indices.stop - 1 - j + range_indices.start][1]
                 for j in range_indices]

    if term_exps: sum_max_exp = max(term_exps)

    for idx, j in enumerate(range_indices):
        r_idx = range_indices.stop - 1 - j + range_indices.start
        num = a_terms[j][0] * a_terms[r_idx][0]
        diff = sum_max_exp - term_exps[idx]
        if diff > 0: num *= (2 * a0) ** diff
        sum_num += num

    return sum_num, sum_max_exp


def _compute_a_terms_recursive(d, p_coeffs, a0):
    """Generates a_r terms where a_r = numerator / (2*a0)^exp."""
    a_terms = {0: (a0, 0)}
    for r in range(1, d + 1):
        sum_num, max_exp = _compute_convolution_numerator(range(1, r), a_terms, a0)

        # Formula: a_r = (P_r * (2a0)^max_exp - sum_num) / (2a0)^(max_exp + 1)
        P_r = p_coeffs.get(r, 0)
        new_num = P_r * (2 * a0) ** max_exp - sum_num
        a_terms[r] = (sp.expand(new_num), max_exp + 1)
    return a_terms


def _build_gb_equations(d, m, k, p_coeffs, a_terms, a0):
    """Builds the system of equations for the Grobner basis."""
    # Initial equation: a_0^2 - P_0 = 0
    eqs = [sp.expand(a0 ** 2 - p_coeffs.get(0, 0))]

    max_deg = m + k - 1
    for r in range(d + 1, max_deg + 1):
        start_j, end_j = max(0, r - d), min(r, d)
        if start_j > end_j: continue

        q2_num, q2_exp = _compute_convolution_numerator(range(start_j, end_j + 1), a_terms, a0)
        # Eq: q^2_coeff - P_r = 0  =>  q2_num - P_r * (2a0)^exp = 0
        P_r = p_coeffs.get(r, 0)
        eqs.append(sp.expand(q2_num - P_r * (2 * a0) ** q2_exp))
    return eqs


def _reconstruct_poly_from_sol(sol, a_terms, d, shift_val, a0):
    """Reconstructs a single valid polynomial from a numeric solution."""
    #if sol[a0] == 0: return None

    print(f"Sol = {sol}")
    coeffs = {}
    for r in range(d + 1):
        num_poly, exp_val = a_terms.get(r, (0, 0))
        try:
            #val = num_poly.subs(sol) / ((2 * sol[a0]) ** exp_val)
            val = num_poly.subs(sol) / ((2 * a0) ** exp_val)
            if not (val.is_number and val.is_rational): return None
            coeffs[r] = val
        except ZeroDivisionError:
            return None

    Q_t = sum(coeffs[i] * x ** i for i in range(d + 1))
    return Q_t.subs(x, x + shift_val)


def _solve_core(points, k, shift):
    """Orchestrator for the core algebraic solving logic."""
    pts, m, d = _setup_shifted_points(points, k, shift[0])

    # Setup variables
    # shifted polynomial has a0^2 = y_i, where i is the shift value
    a0 = sp.sqrt(shift[1]) # or a0 = -sp.sqrt(shift[1])
    c_vars = [sp.Symbol(f'c_{k - i - 1}') for i in range(k)]
    gb_vars = c_vars

    # Compute P coeffs and recursive a_terms
    p_coeffs = _compute_p_coeffs([p[0] for p in pts], [p[1] for p in pts], k, m, c_vars)
    a_terms = _compute_a_terms_recursive(d, p_coeffs, a0)
    valid_polys = []
    if k > 0:
        # Build and solve system
        sys_eqs = _build_gb_equations(d, m, k, p_coeffs, a_terms, a0)
        try:
            print(sys_eqs)
            gb = sp.groebner(sys_eqs, gb_vars, order='lex', domain='QQ')
        except:
            return []

        print(f'gb = {gb}')
        if list(gb) == [1]: return []  # No solution
        solutions = solve_triangular_system(list(gb), gb_vars)
        valid_polys = [_reconstruct_poly_from_sol(s, a_terms, d, shift[0], a0) for s in solutions]
    else:
        valid_polys = [_reconstruct_poly_from_sol({}, a_terms, d, shift[0], a0)]
    return [p for p in valid_polys if p is not None]



# ==========================================
# Part 3: Phaseless Interpolation
# ==========================================


def _calculate_shift(points):
    for x, y in points:
        if y != 0: return -x, y
    return None


def _deduplicate_solutions(candidates):
    """Simplifies polynomials and filters out duplicates."""
    unique_polys = []
    seen_exprs = set()

    for p in candidates:
        simp_p = sp.simplify(p)
        if simp_p not in seen_exprs:
            unique_polys.append(simp_p)
            seen_exprs.add(simp_p)

    return unique_polys


def _log_final_results(solutions):
    """Prints the formatted final solutions to the console."""
    print(f"--- Results ---")
    for i, poly in enumerate(solutions):
        print(f"Solution {i + 1}: q(x) = {poly}")


def _solve_affine_square_roots(points, k):
    """
    Main driver: Orchestrates the search for affine square roots.
    1. Configures shift limits.
    2. searches for candidates.
    3. Deduplicates and logs results.
    """
    print(f"--- Configuration: Points={len(points)}, k={k} ---")

    shift = _calculate_shift(points)
    if shift is None:
        solutions = [0]
        print(f"No shift: it is the zero polynomial")
    else:
        solutions = _solve_core(points, k, shift=shift)
        print(f"Shift {shift}: Found {len(solutions)} solutions")

    unique_solutions = _deduplicate_solutions(solutions)
    _log_final_results(unique_solutions)

    return unique_solutions


def phaseless_interpolation(points, k):
    squared_points = [(x, y**2) for x, y in points]
    return _solve_affine_square_roots(squared_points, k)


if __name__ == "__main__":
    # Test Case 1: y = x
    print("Test Case 1: y = x")
    points1 = [(0, 0), (1, -1), (2, 2)]
    res1 = phaseless_interpolation(points1, k=0)
    print("\n")

    # Test Case 2: y = (x+1)
    print("Test Case 2: y = (x+1)")
    points2 = [(0, 1), (1, -2), (2, 3)]
    res2 = phaseless_interpolation(points2, k=0)
    print("\n")

    # Test Case 3: y = (x+1)^2
    print("Test Case 3: y = (x+1)^2, k=1, but wrong evaluation at x=-2")
    points3 = [(-2, 9), (0, 1), (1, 4), (2, 9)]
    res3 = phaseless_interpolation(points3, k=1)
    print("\n")

    # Test Case 4: Higher degree
    print("Test Case 4: y = x^5 - 6x^4 + 5x^3 + 4x^2 - 3x + 2 from 10 points")
    points4 = [(1, 3), (2, 12), (3, 79), (4, 138), (5, 87), (6, 1208), (-1, 3), (-2, 144), (-3, 817), (0, 2)]
    res4 = phaseless_interpolation(points4, k=1)

    # Test Case 5: Higher degree, higher degree of freedom
    print("Test Case 5: y = x^5 - 6x^4 + 5x^3 + 4x^2 - 3x + 2 from 9 points")
    points5 = [(1, 3), (2, 12), (3, 79), (4, 138), (5, 87), (6, 1208), (-1, 3), (-2, 144), (-3, 817)]
    res5 = phaseless_interpolation(points5, k=2)

    # Test Case 6: Higher degree, higher degree of freedom
    print("Test Case 5: y = x^5 - 6x^4 + 5x^3 + 4x^2 - 3x + 2 from 8 points")
    points6 = [(1, 3), (2, 12), (3, 79), (4, 138), (5, 87), (-1, 3), (-2, 144), (0, 2)]
    res6 = phaseless_interpolation(points6, k=3)
\end{lstlisting}

\bibliography{bibliography}

\end{document}